\newtheorem{lemma}{Lemma}
\newtheorem{corollary}{Corollary}
\newtheorem{proposition}{Proposition}
\newtheorem{theorem}{Theorem}
\theoremstyle{definition}
\newtheorem{definition}{Definition}
\theoremstyle{remark}
\newtheorem{example}{Example}
\newtheorem{problem}{Open Problem}
\newcommand{\GETS}{:=}
\newcommand{\PROCEDURE}[1]{\item[]\quad\item[\textbf{Procedure}] {#1}\textbf{:}}
\newcommand{\VAR}[1]{\mathop{\textit{#1}}}
\DeclareMathOperator{\lcm}{lcm}
\newcommand\union{\cup}
\newcommand\intersect{\cap}
\newcommand\transitive[1]{{#1}^*}
\newcommand\concat{\cdot}
\newcommand\abs[1]{{\lvert #1 \rvert}}
\newcommand\slen[1]{{\lVert #1\rVert}}
\newcommand\mlanglen[1]{\mathop{\textsf{min}}{\lVert #1 \rVert}}
\newcommand\Alphabet{\Sigma}
\newcommand\Lang{\mathcal{L}}
\newcommand\String{S}
\renewcommand\symbol{\sigma}
\newcommand\EmptyString{\varepsilon}
\newcommand\Graph{\mathsf{G}}
\newcommand\Nodes{\mathsf{Q}}
\newcommand\Transitions{\delta}
\newcommand\TransTransitions{\delta^*}
\newcommand\Path{\pi}
\newcommand\Trace{\mathop{\textsf{trace}}}
\newcommand\Grammar{\mathbf{C}}
\newcommand\NonTerminals{\mathbf{N}}
\newcommand\Productions{\mathbf{P}}
\newcommand\NT[1]{\texttt{#1}}
\newcommand\ruleto{\mapsto}
\newcommand\produces[1]{\rightarrow^*_{#1}}
\newcommand\pproduces[1]{\rightarrow^+_{#1}}
\newcommand\ANT[3]{#1[#2,#3]}
\newcommand{\heads}{\mathop{\textsf{heads}}}
\newcommand\NTS[2]{\langle #1 \rangle_{#2}}
\newcommand{\NDString}[2]{\mathop{\textsf{string}}(#1; #2)}
\newcommand{\Anno}[2]{#1_{#2}}
\newcommand{\BigO}{\mathcal{O}}
\pgfplotsset{
    plot/.style={width=0.375\textwidth,
                 cycle list name=black white var,
                 every axis legend/.append style={
                    draw=none,
                    cells={anchor=west},
                    at={(0.5,-0.50)},
                    anchor=south
                 }
                }
}
\title{Querying for Paths in Graphs using Context-Free Path Queries}
\author{Jelle Hellings\\Hasselt University\\
        \texttt{jelle.hellings@uhasselt.be}
      }
\begin{document}

\maketitle

\begin{abstract}
Navigational queries for graph-structured data, such as the regular path queries and the context-free path queries, are usually evaluated to a relation of node-pairs $(m, n)$ such that there is a path from $m$ to $n$ satisfying the conditions of the query. Although this \emph{relational query semantics} has practical value, we believe that the relational query semantics can only provide limited insight in the structure of the graph data. To address the limits of the relational query semantics, we introduce the \emph{all-path query semantics} and the \emph{single-path query semantics}. Under these path-based query semantics, a query is evaluated to all paths satisfying the conditions of the query, or, respectively, to a single such path.

While focusing on context-free path queries, we provide a formal framework for evaluating queries on graphs using both path-based query semantics. For the all-path query semantics, we show that the result of a query can be represented by a finite context-free grammar annotated with node-information relevant for deriving each path in the query result. For the single-path query semantics, we propose to search for a path of minimum length. We reduce the problem of finding such a path of minimum length to finding a string of minimum length in a context-free language, and for deriving such a string we propose a novel algorithm.

Our initial results show that the path-based query semantics have added practical value and that query evaluation for both path-based query semantics is feasible, even when query results grow very large. For the single-path query semantics, determining strict worst-case upper bounds on the size of the query result remains the focus of future work.
\end{abstract}

\section{Introduction}
The graph data model is one of the most versatile and natural data models in use: graph-structured data is everywhere and examples can be found in family trees, social networks, process models, gene networks, XML data, and RDF data~\cite{genenetwork,xml,modelcheck,rdf}. For querying graphs, many different query languages have been developed, proposed, and researched~\cite{rpq,ecrpq,xpath2,xpath1,crpq,relexpr,sparql,icdt2014}. At their core, most graph query languages depend on navigating the graph. This graph navigation is usually performed by means of a regular expression that describes the allowed edge-labeling of the paths that should be traversed in the graph. As the regular expressions have limited expressive power, we focus on a more expressive navigational query language, namely the context-free path queries that use context-free grammars to describe the labeling of paths~\cite{harel,icdt2014,pdl,biocfg}.

These navigational queries expressed by context-free path queries are usually evaluated to a relation of node-pairs $(m, n)$ such that there is a path from $m$ to $n$ whose labeling is described by a context-free grammar---the \emph{relational query semantics}, or to the truth value true whenever such a path exists---the \emph{boolean query semantics}. Although many practical problems can be answered by navigational queries evaluated under the usual semantics, we believe that the relational query semantics and the boolean query semantics are limiting. The inability to view the paths of interest hampers the understanding of the data, makes query debugging harder, and makes it impossible to answer certain practical problems.

To address the limitations of the traditional query semantics, we introduce path-based query semantics. Concretely, we introduce the \emph{all-path query semantics} and the \emph{single-path query semantics}. Under the all-path query semantics, a query is evaluated to all paths satisfying the conditions of the query, and under the single-path query semantics one such path is chosen. The  practical usage of these path-based query semantics can be illustrated by a simple example:

\begin{example}\label{exam:same_generations}
Consider a collection of family trees represented by a graph in which the nodes represent peoples and the edges represent \textit{parentOf} and \textit{childOf} relations (between parents and their children). Consider the context-free grammar with the following production rules: 
\begin{align*}\NT{q} &\ruleto \textit{parentOf}\ \NT{q}\ \textit{childOf},&
              \NT{q} &\ruleto \textit{parentOf}\ \textit{childOf}.\end{align*}

Using the standard relational query semantics, the query $\NT{q}$ evaluates to the relation of node-pairs $(m, n)$ such that $m$ and $n$ are both $k$-th generation descendants of a common ancestor. Using the single-path query semantics that we propose, the query $\NT{q}$ evaluates to a path from $m$ to a common ancestor and from this common ancestor to $n$, showing why $m$ and $n$ are both $k$-th generation descendants of a common ancestor, while, at the same time, showing who this common ancestor is.
\end{example}

Observe that the context-free grammar used in Example~\ref{exam:same_generations} is well-known to not be expressible by a regular expression~\cite{flbook}. Still, this simple example is at the basis of practical queries that are used in, for example, bio-informatics~\cite{biocfg}.

For graph querying, path-based query semantics have only gained limited attention. For the regular expressions, Barceló et al.~\cite{ecrpq} introduced the extended regular path queries that have path variables for output. The main focus of Barceló et al. is, however, on the use of path variables for expressivity purposes, and path-based results are only studied in limited details. Recent work by Hofman et al.~\cite{seper_strings} provides an alternative to use path-based query semantics for debugging:  to gain more insight in the behavior of regular path queries with respect to the expected behavior, Hofman et al. propose a technique based on separability. Although this approach addresses query debugging, it does not lift the other limitations of the relational and the boolean query semantics.

In the setting of model checking using CTL~\cite{modelcheck}, path-based query semantics are widely used. Normally, CTL formulae are evaluated to true or false, indicating if the graph meets or not meets certain conditions. An important ability of CTL model checking algorithms is to not only answer CTL formulae with a truth value, but to also answer with a witnesses or a counterexample for this truth value. These witnesses and counterexamples are represented by a path in the graph that shows why the graph does or doesn't meet the conditions expressed by the CTL formulae. Counterexamples and witnesses also exists for other modal logics, such as LTL. These path-based witnesses and counterexamples are especially useful in the analysis of the model checking results.

In this work we study path-based query semantics. We provide a formal framework for evaluating queries on graphs using the all-path query semantics and the single-path query semantics. To achieve this, we first show how to represent the query result under the all-path query semantics by a context-free grammar annotated with node-information, and we show that this context-free grammar can be used to derive exactly those paths that are in the query result.

For the single-path query semantics, we propose to search for a path of minimum length. As we can represent the set of all paths by an annotated context-free grammar, we reduce the problem of finding a path of minimum length matching the query conditions to finding a string of minimum length in a context-free language. For deriving such a string of minimum length, we propose a novel algorithm. We then proceed with the analysis of this minimum-length string derivation algorithm applied to annotated context-free grammars by analyzing the possible length of minimum-length paths. For annotated context-free grammars over the singleton alphabet, we show a close-to-strict worst-case upper bound on the length of minimum-length paths that is linear in the number of nodes in the graph. For general annotated context-free grammars we show that the worst-case upper bound on the length of minimum-length paths is at least quadratic in the number of nodes in the graph.

To test the behavior of the minimum-length path derivation algorithm in practice, we performed measurements on an initial implementation. These results show promise, as the initial implementation shows acceptable performance for a range of context-free path queries.

\paragraph*{Organization} 
In Section~\ref{sec:prelim}, we present the basic notions used throughout this paper. In Section~\ref{sec:cfpq}, we present the context-free path queries together with their usual semantics, and we introduce the all-path and single-path query semantics. In Section~\ref{sec:all_path} and Section~\ref{sec:single_path} we introduce approaches to evaluate queries using the all-path query semantics and the single-path query semantics, respectively. In Section~\ref{sec:exp}, we present our results on a small-scale implementation. In Section~\ref{sec:conclusions}, we summarize our findings and propose directions for future work.

\section{Preliminaries}\label{sec:prelim}
We call a sequence $\symbol_1 \dots \symbol_n$ of symbols a string. The length of string $\String = \symbol_1 \dots \symbol_n$, denoted by $\slen{\String}$, is $n$. The empty string is denoted by $\EmptyString$ and we usually treat individual symbols as strings of length one. The concatenation of two strings $\String_1$ and $\String_2$ is denoted by $\String_1 \concat \String_2$. If $\Alphabet$ is a set of symbols, then we denote the set of all strings made of symbols from $\Alphabet$ by $\transitive{\Alphabet}$.

\begin{definition}
A \emph{graph} is a triple $\Graph = (\Nodes, \Alphabet, \Transitions)$ with $\Nodes \intersect \Alphabet = \emptyset$, in which $\Nodes$ is a finite set of nodes, $\Alphabet$ is a finite set of alphabet symbols used as edge labels, and $\Transitions \subseteq \Nodes \times \Alphabet \times \Nodes$ is a finite set of labeled edges. 

If $m \in \Nodes$ and $\symbol \in \Alphabet$, then $\Transitions(m, \symbol) = \{ n \mid (m, \symbol, n) \in \Transitions \}$ denotes those nodes that have an incoming edge labeled with $\symbol$ originating at $m$. If $m \in \Nodes$ and $\String \in \transitive{\Alphabet}$, then \[\TransTransitions(m, \String) = \begin{cases} \{ m \} & \text{if $\String = \EmptyString$};\\
		\bigcup_{n \in \Transitions(m, \symbol)} \TransTransitions(n, \String') &\text{if $\String = \symbol \concat \String'$}.
\end{cases}\] The \emph{language of graph} $\Graph$ with respect to $m, n \in \Nodes$, denoted by $\Lang(\Graph; m, n)$, is defined by \[\Lang(\Graph; m, n) = \{ \String \mid \String \in \transitive{\Alphabet} \land n \in \TransTransitions(m, \String) \}.\]
\end{definition}

Let $\Graph = (\Nodes, \Alphabet, \Transitions)$ be a graph. A \emph{path} $\Path = n_1\symbol_1\dots n_{i-1} \symbol_{i-1} n_i$ in $\Graph$ is a sequence with, for all $1 \leq j < i$, $(n_j, \symbol_j, n_{j+1}) \in \Transitions$. We write $n_1 \Path n_i$ to indicate that $\Path$ starts at node $n_1$ and ends at node $n_i$. The \emph{trace} of $\Path$ is defined by $\Trace(\Path) = \symbol_1\dots \symbol_{i-1}$. Observe that traces are strings over the alphabet $\Alphabet$.

\begin{definition}
A \emph{context-free grammar} is a triple $\Grammar = (\NonTerminals, \Alphabet, \Productions)$  with $\NonTerminals \intersect \Alphabet = \emptyset$, in which $\NonTerminals$ is a set of non-terminals, $\Alphabet$ is a finite set of alphabet symbols, and $\Productions$ is a set of production rules.\footnote{Usually, context-free grammars are defined with a dedicated start non-terminal. It is straightforward to specialize our results on context-free grammars to the setting with a dedicated start non-terminal.} In the above, a production rule is of the form $\NT{a} \ruleto \NT{b}\ \NT{c}$ or $\NT{a} \ruleto \symbol$, in which $\NT{a}, \NT{b}, \NT{c} \in \NonTerminals$ and $\symbol \in \Alphabet$.\footnote{To simplify the presentation, we assume that context-free grammars are in Chomsky Normal Form~\cite{flbook}, and we exclude the derivation of $\EmptyString$. Unless stated otherwise, it is straightforward to generalize our results on context-free grammars to the setting that includes production rules of the form $\NT{a} \ruleto \EmptyString$.}

Production rules are to be interpreted as rewrite rules:  if $\String = \String_1 \concat \NT{a} \concat \String_2$ is a string with $\String_1, \String_2 \in \transitive{(\NonTerminals \union \Alphabet)}$ and $\NT{a} \in \NonTerminals$, and if $(\NT{a} \ruleto \String') \in \Productions$, then $\String$ can be rewritten into $\String_1 \concat \String' \concat \String_2$ by application of $\NT{a} \ruleto \String'$. We write $\String \produces{\Productions} \String'$ if $\String$ can be rewritten into $\String'$ by a finite number of rewrites using production rules in $\Productions$ and we write $\String \pproduces{\Productions} \String'$ if $\String \produces{\Productions} \String'$ and at least one rewrite step is necessary to rewrite $\String$ into $\String'$. The \emph{language of a context-free grammar} $\Grammar$ with respect to $\NT{a} \in \NonTerminals$, denoted by $\Lang(\Grammar; \NT{a})$, is defined by \[\Lang(\Grammar; \NT{a}) = \{ \String \mid \String \in \transitive{\Alphabet} \land \NT{a} \produces{\Productions} \String \}.\]
\end{definition}

\section{Context-free path queries}\label{sec:cfpq}

Let $\Grammar = (\NonTerminals, \Alphabet, \Productions)$ be a context-free grammar with $\NT{a} \in \NonTerminals$. We say that $\NT{a}$ is a \emph{context-free path query}. Usually, these queries are evaluated using the \emph{boolean query semantics} or the \emph{relational query semantics}.\footnote{Commonly, relational query semantics is referred to as path query semantics~\cite{relexpr}. To avoid confusion with our path-based query semantics, we have chosen for a different naming in this paper.} 
\begin{enumerate}
\item Using \emph{boolean query semantics}, the query $\NT{a}$ on graph $\Graph$ evaluates to the truth value of $\exists m\exists n\  \Lang(\Grammar; \NT{a}) \intersect \Lang(\Graph; m, n) \neq \emptyset$.
\item Using \emph{relational query semantics}, the query $\NT{a}$ on graph $\Graph$ evaluates to the binary relation $\{ (m, n) \mid \Lang(\Grammar; \NT{a}) \intersect \Lang(\Graph; m, n) \neq \emptyset \}$.
\end{enumerate}
We study two alternative ways of evaluating queries on graphs: the \emph{all-path query semantics} and the \emph{single-path query semantics}:
\begin{enumerate}[resume]
\item Using \emph{all-path query semantics}, the query $\NT{a}$ on graph $\Graph$, with respect to nodes $m, n \in \Nodes$, evaluates to the set of all paths $m\Path n$ in $\Graph$ with $\Trace(\Path) \in \Lang(\Grammar; \NT{a})$.
\item Using \emph{single-path query semantics}, the query $\NT{a}$ on graph $\Graph$, with respect to nodes $m, n \in \Nodes$, evaluates to a single path $m\Path n$ in $\Graph$ with $\Trace(\Path) \in \Lang(\Grammar; \NT{a})$ (if such a path exists).
\end{enumerate}

The following example illustrates the usages of these query semantics.

\begin{example}
Let $\Graph$ be a collection of family trees in which the nodes represent people and the edges represent $\textit{familyOf}$ relations (between parents and their children). We have the context-free grammar with the following production rules:
\begin{align*}\NT{q} &\ruleto \textit{familyOf},&
              \NT{q} &\ruleto \NT{q}\ \NT{q}.\end{align*}
Depending on the semantics used, the query $\NT{q}$ evaluated on $\Graph$ answers various questions:
\begin{enumerate}
\item Using boolean query semantics: \emph{`are there family members in these family trees?'}
\item Using relational query semantics: \emph{`provide all pairs of people that are related.'}
\item Using all-path query semantics: \emph{`provide every way in which $m$ and $n$ are related.'}
\item Using single-path query semantics: \emph{`provide a proof that $m$ and $n$ are related,'} or \emph{`show how $m$ and $n$ are related.'}
\end{enumerate}
\end{example}

\section{Answering queries using all-path query semantics}\label{sec:all_path}
If a context-free path query is evaluated on cyclic graphs, then the query result can be an infinite set of paths. Hence, before we look into how to answer a query using the all-path query semantics, we need to determine how to represent such an infinite set of paths using a finite structure. Graphs are strongly related to finite automata and it is well-known that the intersection of the language of a finite automaton and the language of a context-free grammar is itself a language that can be represented by a context-free grammar:

\begin{lemma}[Bar-Hillel et al.~\cite{intpars}]\label{lem:cfgintgraph}
Let $\Grammar = (\NonTerminals, \Alphabet, \Productions)$ be a context-free grammar with $\NT{a} \in \NonTerminals$ and let $\Graph = (\Nodes, \Alphabet, \Transitions)$ be a graph with $m, n \in \Nodes$. The language $\Lang(\Grammar; \NT{a}) \intersect \Lang(\Graph; m, n)$ can be represented by a context-free grammar.
\end{lemma}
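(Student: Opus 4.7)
The plan is to use the classical Bar-Hillel triple construction, adapted to our graph setting, and then to verify correctness by a straightforward induction on derivation length. Since $\Grammar$ is in Chomsky Normal Form, every rule has shape $\NT{a} \ruleto \NT{b}\ \NT{c}$ or $\NT{a} \ruleto \symbol$, which makes the construction particularly clean.

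First, I would construct a new context-free grammar $\Grammar' = (\NonTerminals', \Alphabet, \Productions')$ whose non-terminals are triples $\ANT{\NT{a}}{p}{q}$ with $\NT{a} \in \NonTerminals$ and $p, q \in \Nodes$. Intuitively, the non-terminal $\ANT{\NT{a}}{p}{q}$ is meant to derive exactly those strings $\String$ such that $\NT{a} \produces{\Productions} \String$ and $q \in \TransTransitions(p, \String)$, i.e., strings that $\NT{a}$ derives in $\Grammar$ and that simultaneously label a path from $p$ to $q$ in $\Graph$. The production rules of $\Productions'$ are defined accordingly: for every rule $(\NT{a} \ruleto \NT{b}\ \NT{c}) \in \Productions$ and every choice of $p, r, q \in \Nodes$, add the rule $\ANT{\NT{a}}{p}{q} \ruleto \ANT{\NT{b}}{p}{r}\ \ANT{\NT{c}}{r}{q}$; and for every rule $(\NT{a} \ruleto \symbol) \in \Productions$ and every edge $(p, \symbol, q) \in \Transitions$, add the rule $\ANT{\NT{a}}{p}{q} \ruleto \symbol$. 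Since $\NonTerminals$, $\Nodes$, $\Productions$, and $\Transitions$ are all finite, $\Grammar'$ is a finite context-free grammar.

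Next, I would prove the key invariant by induction on the length of the string $\String$ (equivalently, on the number of rewrite steps in a derivation, since $\Grammar$ is in CNF and has no $\EmptyString$-productions): for all $\NT{a} \in \NonTerminals$, all $p, q \in \Nodes$, and all $\String \in \transitive{\Alphabet}$,
\[
 \ANT{\NT{a}}{p}{q} \produces{\Productions'} \String \quad\Longleftrightarrow\quad \NT{a} \produces{\Productions} \String \ \text{and}\ q \in \TransTransitions(p, \String).
\]
The base case ($\slen{\String} = 1$) follows directly from the construction of the terminal rules, since $\ANT{\NT{a}}{p}{q} \ruleto \symbol$ is in $\Productions'$ exactly when $\NT{a} \ruleto \symbol$ is in $\Productions$ and $(p, \symbol, q) \in \Transitions$. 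For the inductive step, a derivation from $\ANT{\NT{a}}{p}{q}$ of length greater than one must start with a binary rule $\ANT{\NT{a}}{p}{q} \ruleto \ANT{\NT{b}}{p}{r}\ \ANT{\NT{c}}{r}{q}$, splitting $\String = \String_1 \concat \String_2$; the induction hypothesis applied to the two sub-derivations then yields $\NT{b} \produces{\Productions} \String_1$, $\NT{c} \produces{\Productions} \String_2$, $r \in \TransTransitions(p, \String_1)$, and $q \in \TransTransitions(r, \String_2)$, which combine to give the desired conclusion. The converse direction is symmetric: any derivation $\NT{a} \produces{\Productions} \String$ together with a path witnessing $q \in \TransTransitions(p, \String)$ can be split at the top-level rule and reassembled using the construction of $\Productions'$.

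Specializing the invariant to the starting non-terminal $\ANT{\NT{a}}{m}{n}$ then gives $\Lang(\Grammar'; \ANT{\NT{a}}{m}{n}) = \Lang(\Grammar; \NT{a}) \intersect \Lang(\Graph; m, n)$, which completes the proof. I do not anticipate a single hard step; the mild obstacle is bookkeeping in the inductive proof (tracking the intermediate node $r$ that serves both as the rewrite split point in $\Productions'$ and as the midpoint of the path in $\Graph$), and verifying that both directions of the equivalence are handled uniformly by the triple-based non-terminals.
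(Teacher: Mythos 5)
Your proposal is correct and follows essentially the same route as the paper: the paper proves this lemma via its direct triple-based construction of the annotated grammar (Definition~\ref{def:anno_grammar} and Theorem~\ref{thm:anno_grammar}), whose non-terminals $\ANT{\NT{a}}{m}{n}$ and production rules are exactly your $\Grammar'$, rather than via the usual pushdown-automaton product. The only differences are cosmetic: you keep all triples instead of trimming to the non-terminals with non-empty intersection languages (which does not affect the generated language), and you spell out the inductive correctness invariant that the paper leaves implicit in Proposition~1.
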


Lemma~\ref{lem:cfgintgraph} only guarantees that there is a finite representation of the set of all traces of paths $m\Path n$ in graph $\Graph$ with $\Trace(\Path) \in \Lang(\Grammar; \NT{a})$. As several paths can have the same trace, the set of traces cannot be directly mapped to a set of paths. To allow for a direct representation of the set of paths, we show how to construct a context-free grammar that is annotated with node-information relevant for the derivation of paths.

\begin{definition}\label{def:anno_grammar}
Let $\Grammar = (\NonTerminals, \Alphabet, \Productions)$ be a context-free grammar and let $\Graph = (\Nodes, \Alphabet, \Transitions)$ be a graph. We denote triples $(\NT{a}, m, n) \in \NonTerminals \times \Nodes^2$ by $\ANT{\NT{a}}{m}{n}$. An \emph{annotated grammar} over $(\Grammar, \Graph)$ is a context-free grammar $\Anno\Grammar\Graph = (\Anno\NonTerminals\Graph, \Alphabet, \Anno\Productions\Graph)$ in which $\Anno\NonTerminals\Graph \subseteq \NonTerminals \times \Nodes^2$; each production rule in $\Anno\Productions\Graph$ is of the form $\ANT{\NT{a}}{m}{n} \ruleto \ANT{\NT{b}}{m}{o}\ \ANT{\NT{c}}{o}{n}$ or $\ANT{\NT{a}}{m}{n} \ruleto \symbol$, with $\NT{a}, \NT{b}, \NT{c} \in \NonTerminals$, $m, n, o \in \Nodes$, and $\symbol \in \Alphabet$; and that satisfies the following three properties:
\begin{enumerate}
    \item \label{def:anno_grammar:nonterminals} $\ANT{\NT{a}}{m}{n} \in \Anno\NonTerminals\Graph$ if and only if $\Lang(\Grammar; \NT{a}) \intersect \Lang(\Graph; m, n) \neq \emptyset$,
    \item \label{def:anno_grammar:bproductions} $(\ANT{\NT{a}}{m}{n} \ruleto \ANT{\NT{b}}{m}{o}\ \ANT{\NT{c}}{o}{n}) \in \Anno\Productions\Graph$ if and only if $(\NT{a} \ruleto \NT{b}\ \NT{c}) \in \Productions$,
    \item \label{def:anno_grammar:sproductions} $(\ANT{\NT{a}}{m}{n} \ruleto \symbol) \in \Anno\Productions\Graph$ if and only if $(m, \symbol, n) \in \Transitions$ and $(\NT{a} \ruleto \symbol) \in \Productions$.
\end{enumerate}
We say that a non-terminal $\ANT{\NT{a}}{m}{n} \in \Anno\NonTerminals\Graph$ can \emph{derive} path $\Path = n_1 \symbol_1\dots \symbol_{i-1} n_{i}$ if it can derive the string $\String = \symbol_1\dots \symbol_{i-1}$ such that, for each $1\leq j < i$, the rewrite step producing $\symbol_j$ used a production rule of the form $(\ANT{\NT{b}}{n_j}{n_{j+1}} \ruleto \symbol_j) \in \Anno\Productions\Graph$.
\end{definition}

We illustrate the concept of a annotated grammar with an example:

\begin{example}\label{exam:social_network}
Let $\Graph$ be the social network visualized in Figure~\ref{fig:exam:friends} in which the nodes represent people and the edges represent $\textit{friendOf}$ relations. Alice wants to know how she can contact Eve via friends, via friends of friends, and so on. Hence, she writes a context-free grammar $\Grammar$ with the following production rules $\Productions$: 
 \begin{align*}\NT{q} &\ruleto \textit{friendOf},& \NT{q} &\ruleto \NT{q}\ \NT{q}.\end{align*}

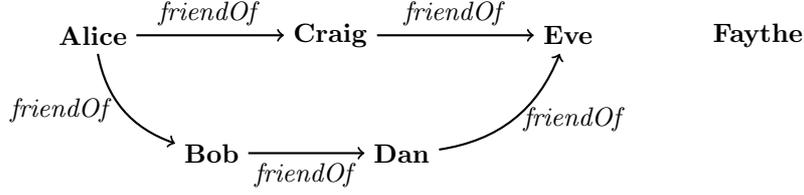
\begin{figure}[t!]
    \centering
        \begin{tikzpicture}[scale=1.25]
            \node (n1) at (  0,  0) {\textbf{Alice}};
            \node (n2) at (1.25, -1.25) {\textbf{Bob}} edge[thick,<-, bend left] node[left] {\textit{friendOf}} (n1);
            \node (n3) at (2.5, 0) {\textbf{Craig}} edge[thick,<-] node[above] {\textit{friendOf}} (n1);;
            \node (n4) at (3.25, -1.25) {\textbf{Dan}}  edge[thick,<-] node[below] {\textit{friendOf}} (n2);
            \node (n5) at (5, 0) {\textbf{Eve}} edge[thick,<-] node[above] {\textit{friendOf}} (n3)
                                         edge[thick,<-, bend left] node[right] {\textit{friendOf}} (n4);
            \node (n6) at (7, 0) {\textbf{Faythe}};       
        \end{tikzpicture}
        \caption{A social network in which persons (represented by nodes) can have \emph{friendOf}-relations (represented by labeled edges).}\label{fig:exam:friends}
\end{figure}

For brevity, we refer to each person by the first letter of their name. The annotated grammar over $(\Grammar, \Graph)$ has the following non-terminals:
\[\ANT{\NT{q}}{\text{A}}{\text{B}}, \ANT{\NT{q}}{\text{A}}{\text{C}}, \ANT{\NT{q}}{\text{A}}{\text{D}}, \ANT{\NT{q}}{\text{A}}{\text{E}},\ANT{\NT{q}}{\text{B}}{\text{D}}, \ANT{\NT{q}}{\text{B}}{\text{E}}, \ANT{\NT{q}}{\text{C}}{\text{E}}, \ANT{\NT{q}}{\text{D}}{\text{E}}.\]
The production rules $\Anno\Productions\Graph$ of the annotated grammar consists of the production rules that correspond to \textit{friendOf}-edges in the social network:
\begin{align*}
    \ANT{\NT{q}}{\text{A}}{\text{B}} &\ruleto \textit{friendOf},&
    \ANT{\NT{q}}{\text{A}}{\text{C}} &\ruleto \textit{friendOf},\\
    \ANT{\NT{q}}{\text{B}}{\text{D}} &\ruleto \textit{friendOf},&
    \ANT{\NT{q}}{\text{C}}{\text{E}} &\ruleto \textit{friendOf},\\
    \ANT{\NT{q}}{\text{D}}{\text{E}} &\ruleto \textit{friendOf}.
    \intertext{Furthermore, the following production rules of the annotated grammar express the combination of paths in the social network to form bigger paths:}
    \ANT{\NT{q}}{\text{A}}{\text{D}} &\ruleto \ANT{\NT{q}}{\text{A}}{\text{B}}\ \ANT{\NT{q}}{\text{B}}{\text{D}},&
    \ANT{\NT{q}}{\text{A}}{\text{E}} &\ruleto \ANT{\NT{q}}{\text{A}}{\text{B}}\ \ANT{\NT{q}}{\text{B}}{\text{E}},\\
    \ANT{\NT{q}}{\text{A}}{\text{E}} &\ruleto \ANT{\NT{q}}{\text{A}}{\text{C}}\ \ANT{\NT{q}}{\text{C}}{\text{E}},&
    \ANT{\NT{q}}{\text{A}}{\text{E}} &\ruleto \ANT{\NT{q}}{\text{A}}{\text{D}}\ \ANT{\NT{q}}{\text{D}}{\text{E}},\\
    \ANT{\NT{q}}{\text{B}}{\text{E}} &\ruleto \ANT{\NT{q}}{\text{B}}{\text{D}}\ \ANT{\NT{q}}{\text{D}}{\text{E}}.
\end{align*}
To produce a path from Alice to Eve, we can use this annotated grammar:
\begin{align*}
    &\ANT{\NT{q}}{\text{Alice}}{\text{Eve}}\\
    \produces{\Anno\Productions\Graph}&\text{\{Rewrite $\ANT{\NT{q}}{\text{Alice}}{\text{Eve}} \ruleto \ANT{\NT{q}}{\text{Alice}}{\text{Bob}}\ \ANT{\NT{q}}{\text{Bob}}{\text{Eve}}$\}}\\
        &\ANT{\NT{q}}{\text{Alice}}{\text{Bob}}\ \ANT{\NT{q}}{\text{Bob}}{\text{Eve}}\\
    \produces{\Anno\Productions\Graph}&\text{\{Rewrite $\ANT{\NT{q}}{\text{Bob}}{\text{Eve}} \ruleto \ANT{\NT{q}}{\text{Bob}}{\text{Dan}}\ \ANT{\NT{q}}{\text{Dan}}{\text{Eve}}$\}}\\
                                           &\ANT{\NT{q}}{\text{Alice}}{\text{Bob}}\ \ANT{\NT{q}}{\text{Bob}}{\text{Dan}}\ \ANT{\NT{q}}{\text{Dan}}{\text{Eve}}\\
    \produces{\Anno\Productions\Graph}&\text{\{Rewrite $\ANT{\NT{q}}{\text{Alice}}{\text{Bob}} \ruleto \textit{friendOf}$, ... \}}\\
                                           &\textit{friendOf}\ \textit{friendOf}\ \textit{friendOf}.
\end{align*}
The node-information in each annotated non-terminal allows us to conclude that there is a path from Alice to Eve of length three, namely the path \[\textbf{Alice}\ \textit{friendOf}\ \textbf{Bob}\ \textit{friendOf}\ \textbf{Dan}\ \textit{friendOf}\ \textbf{Eve}.\]
\end{example}

As Example~\ref{exam:social_network} shows, an annotated non-terminal in an annotated grammar describes the mapping between the trace that is derived by the non-terminal and the first and last node of a path having this trace.

\begin{proposition}
Let $\Grammar = (\NonTerminals, \Alphabet,  \Productions)$ be a context-free grammar, let $\Graph = (\Nodes, \Alphabet, \Transitions)$ be a graph, let $\Anno\Grammar\Graph = (\Anno\NonTerminals\Graph, \Alphabet, \Anno\Productions\Graph)$ be the annotated grammar over $(\Grammar, \Graph)$, let $m\Path n$ be a path in $\Graph$, and let $\NT{a} \in \NonTerminals$ be a non-terminal. We have $\Trace(\Path) \in \Lang(\Grammar; \NT{a})$ if and only if we can derive $\Path$ from $\ANT{\NT{a}}{m}{n} \in \Anno\NonTerminals\Graph$.
\end{proposition}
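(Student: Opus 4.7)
The plan is to prove both directions by structural induction on derivations, leveraging that the annotated grammar essentially mirrors $\Grammar$ with extra node bookkeeping pinned down by Definition~\ref{def:anno_grammar}.

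For the forward direction (``derivable from $\ANT{\NT{a}}{m}{n}$ implies $\Trace(\Path) \in \Lang(\Grammar;\NT{a})$''), I would induct on the length of the derivation of $\Path$ from $\ANT{\NT{a}}{m}{n}$. In the base case the derivation consists of a single terminal production $\ANT{\NT{a}}{m}{n} \ruleto \symbol$, so $\Path = m\,\symbol\,n$ and, by property~\ref{def:anno_grammar:sproductions} of Definition~\ref{def:anno_grammar}, $(\NT{a} \ruleto \symbol) \in \Productions$, giving $\Trace(\Path) = \symbol \in \Lang(\Grammar;\NT{a})$. In the inductive case the derivation starts with $\ANT{\NT{a}}{m}{n} \ruleto \ANT{\NT{b}}{m}{o}\ \ANT{\NT{c}}{o}{n}$, and by the definition of derivation of a path the path splits as $\Path = m\,\Path_1\,o\,\Path_2\,n$ with $\ANT{\NT{b}}{m}{o}$ deriving $\Path_1$ and $\ANT{\NT{c}}{o}{n}$ deriving $\Path_2$. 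By induction hypothesis $\Trace(\Path_i) \in \Lang(\Grammar;\NT{b})$ and $\in \Lang(\Grammar;\NT{c})$, and by property~\ref{def:anno_grammar:bproductions} the rule $\NT{a} \ruleto \NT{b}\ \NT{c}$ belongs to $\Productions$. Concatenating the two sub-derivations yields $\Trace(\Path) = \Trace(\Path_1) \concat \Trace(\Path_2) \in \Lang(\Grammar;\NT{a})$.

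For the converse direction, I would induct on the length of the derivation of $\Trace(\Path)$ from $\NT{a}$ in $\Grammar$. Since $\Trace(\Path) \in \Lang(\Grammar;\NT{a}) \intersect \Lang(\Graph;m,n)$, property~\ref{def:anno_grammar:nonterminals} guarantees $\ANT{\NT{a}}{m}{n} \in \Anno\NonTerminals\Graph$ so that the target non-terminal exists. If the derivation has length one, then $\NT{a} \ruleto \symbol$ and $\Path = m\,\symbol\,n$ with $(m,\symbol,n) \in \Transitions$, so property~\ref{def:anno_grammar:sproductions} provides $\ANT{\NT{a}}{m}{n} \ruleto \symbol$, witnessing the desired derivation of $\Path$. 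Otherwise the derivation begins with some $\NT{a} \ruleto \NT{b}\ \NT{c}$ and $\Trace(\Path) = \String_1 \concat \String_2$ with $\NT{b} \produces{\Productions} \String_1$ and $\NT{c} \produces{\Productions} \String_2$. Here I split $\Path$ after $\slen{\String_1}$ symbols as $\Path = m\,\Path_1\,o\,\Path_2\,n$; then $\Trace(\Path_i) = \String_i$, so $\String_1 \in \Lang(\Grammar;\NT{b}) \intersect \Lang(\Graph;m,o)$ and $\String_2 \in \Lang(\Grammar;\NT{c}) \intersect \Lang(\Graph;o,n)$. Property~\ref{def:anno_grammar:nonterminals} yields $\ANT{\NT{b}}{m}{o}, \ANT{\NT{c}}{o}{n} \in \Anno\NonTerminals\Graph$, property~\ref{def:anno_grammar:bproductions} gives the lifted rule $\ANT{\NT{a}}{m}{n} \ruleto \ANT{\NT{b}}{m}{o}\ \ANT{\NT{c}}{o}{n}$, and the induction hypothesis provides derivations of $\Path_1$ and $\Path_2$ from the two child non-terminals; concatenating them yields a derivation of $\Path$ from $\ANT{\NT{a}}{m}{n}$.

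The whole argument is essentially a pair of routine tree-translation inductions; the only subtle point is in the converse direction, where the splitting node $o$ needed to instantiate the lifted binary rule is not given by the original derivation in $\Grammar$ but must be read off from the path $\Path$ at the position determined by $\slen{\String_1}$. Verifying that the resulting annotated non-terminals do exist in $\Anno\NonTerminals\Graph$ is exactly what property~\ref{def:anno_grammar:nonterminals} is tailored for, so no separate existence argument is needed.
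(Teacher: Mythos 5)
Your proof is correct. The paper states this proposition without giving a proof, treating it as an immediate consequence of Definition~\ref{def:anno_grammar}; your two-way induction---on the annotated derivation for one direction and on the derivation in $\Grammar$ for the other, with property~\ref{def:anno_grammar:nonterminals} supplying the annotated non-terminals at the splitting node and properties~\ref{def:anno_grammar:bproductions} and~\ref{def:anno_grammar:sproductions} translating the rules---is precisely the routine argument the paper leaves implicit.
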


As annotated grammars are context-free grammars, one can use existing context-free enumeration techniques~\cite{unilex,cfg_uhasselt,dong,lexenum} to produce some of the paths represented by the annotated grammar. The efficiency of these techniques depend on the size of the annotated grammar. We observe the following worst-case upper bounds:

\begin{lemma}\label{lem:graphanno_sizes}
Let $\Grammar = (\NonTerminals, \Alphabet, \Productions)$ be a context-free grammar, let $\Graph = (\Nodes, \Alphabet, \Transitions)$ be a graph, and let $\Anno\Grammar\Graph = (\Anno\NonTerminals\Graph, \Alphabet, \Anno\Productions\Graph)$ be the annotated grammar over $(\Grammar, \Graph)$. We have $\abs{\Anno\NonTerminals\Graph} \leq \abs{\NonTerminals}\abs{\Nodes}^2$ and $\abs{\Anno\Productions\Graph} \leq \abs{\Productions}\abs{\Nodes}^3 + \min(\abs{\NonTerminals}, \abs{\Productions})\abs{\Transitions}$.
\end{lemma}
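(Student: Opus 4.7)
The proof plan is to bound non-terminals and the two kinds of production rules separately, directly from the three conditions of Definition~\ref{def:anno_grammar}.

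First, the bound on $\abs{\Anno\NonTerminals\Graph}$ is immediate. By definition we have $\Anno\NonTerminals\Graph \subseteq \NonTerminals \times \Nodes^2$, so $\abs{\Anno\NonTerminals\Graph} \leq \abs{\NonTerminals}\abs{\Nodes}^2$. (Condition~\ref{def:anno_grammar:nonterminals} only restricts $\Anno\NonTerminals\Graph$ further, so it cannot increase the count.)

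Next, I would partition $\Anno\Productions\Graph$ into the binary rules of the form $\ANT{\NT{a}}{m}{n} \ruleto \ANT{\NT{b}}{m}{o}\ \ANT{\NT{c}}{o}{n}$ and the terminal rules of the form $\ANT{\NT{a}}{m}{n} \ruleto \symbol$, and bound each class. For the binary rules, condition~\ref{def:anno_grammar:bproductions} tells us that each such annotated rule is uniquely determined by a choice of an underlying rule $(\NT{a} \ruleto \NT{b}\ \NT{c}) \in \Productions$ together with three nodes $m, o, n \in \Nodes$. Hence the number of binary rules is at most $\abs{\Productions}\abs{\Nodes}^3$.

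For the terminal rules, condition~\ref{def:anno_grammar:sproductions} requires both $(m, \symbol, n) \in \Transitions$ and $(\NT{a} \ruleto \symbol) \in \Productions$. For each symbol $\symbol \in \Alphabet$, let $k_\symbol = \abs{\{\NT{a} \in \NonTerminals \mid (\NT{a} \ruleto \symbol) \in \Productions\}}$. Then the number of terminal rules in $\Anno\Productions\Graph$ equals $\sum_{(m,\symbol,n) \in \Transitions} k_\symbol$. Observe that $k_\symbol \leq \abs{\NonTerminals}$ trivially, and also $k_\symbol \leq \abs{\Productions}$ because the production rules counted by $k_\symbol$ are all distinct elements of $\Productions$. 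Therefore $k_\symbol \leq \min(\abs{\NonTerminals}, \abs{\Productions})$, and summing over all edges gives at most $\min(\abs{\NonTerminals}, \abs{\Productions})\abs{\Transitions}$ terminal rules. Adding the two bounds yields the claimed bound on $\abs{\Anno\Productions\Graph}$.

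There is no real obstacle here; the only subtlety worth highlighting is the $\min$ in the terminal-rule bound, which comes from the two independent ways of bounding $k_\symbol$ (by the set of non-terminals, and by the subset of $\Productions$ consisting of $\symbol$-producing rules). Everything else is a direct counting argument from Definition~\ref{def:anno_grammar}.
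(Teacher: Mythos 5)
Your counting argument is correct: the bound on $\abs{\Anno\NonTerminals\Graph}$ follows from $\Anno\NonTerminals\Graph \subseteq \NonTerminals \times \Nodes^2$, the binary rules are determined by an underlying rule plus three nodes, and the terminal rules are determined per edge by the $\symbol$-producing rules, giving the $\min(\abs{\NonTerminals},\abs{\Productions})\abs{\Transitions}$ term. The paper states this lemma without proof precisely because it regards these bounds as immediate from Definition~\ref{def:anno_grammar}, and your argument is exactly that implicit reasoning, so nothing further is needed.
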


We propose annotated grammars to represent the query result of a query using the all-path query semantics. Hence, we also need to show how to construct such an annotated grammar, which we do next.

\begin{theorem}\label{thm:anno_grammar}
Let $\Grammar = (\NonTerminals, \Alphabet, \Productions)$ be a context-free grammar, let $\Graph = (\Nodes, \Alphabet, \Transitions)$ be a graph. We can construct the annotated grammar over $(\Grammar, \Graph)$ in $\BigO(\abs{\NonTerminals}\abs{\Transitions} + (\abs{\NonTerminals}\abs{\Nodes})^3)$.
\end{theorem}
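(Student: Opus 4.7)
The plan is to adapt the classical CYK-style bottom-up fixed-point algorithm for context-free reachability: I would first compute the set of annotated non-terminals $\Anno\NonTerminals\Graph$ by iterative saturation, emitting the production rules in $\Anno\Productions\Graph$ along the way. All three conditions in Definition~\ref{def:anno_grammar} are monotone in $\Anno\NonTerminals\Graph$, so saturation is sound and complete by a straightforward induction on the height of derivations $\NT{a} \produces{\Productions} \String$, combined with the conclusion of Lemma~\ref{lem:cfgintgraph}.

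First, I would initialize by scanning the transitions: for every $(m, \symbol, n) \in \Transitions$ and every rule $(\NT{a} \ruleto \symbol) \in \Productions$, mark $\ANT{\NT{a}}{m}{n}$ as known (if not already) and emit the terminal rule $\ANT{\NT{a}}{m}{n} \ruleto \symbol$. Since at most $\abs{\NonTerminals}$ production rules share a given symbol, the total cost of this phase is $\BigO(\abs{\NonTerminals}\abs{\Transitions})$, which accounts for the first summand of the claimed bound.

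Second, I would saturate under binary rules using a worklist of marked non-terminals with endpoint-indexed lookup tables. When $\ANT{\NT{a}}{m}{n}$ is popped, for each rule $(\NT{c} \ruleto \NT{a}\ \NT{b}) \in \Productions$ and each $o \in \Nodes$ such that $\ANT{\NT{b}}{n}{o}$ is already marked, I emit the binary rule $\ANT{\NT{c}}{m}{o} \ruleto \ANT{\NT{a}}{m}{n}\ \ANT{\NT{b}}{n}{o}$ and, if $\ANT{\NT{c}}{m}{o}$ is fresh, push it onto the worklist; the symmetric case $(\NT{c} \ruleto \NT{b}\ \NT{a})$ is handled analogously. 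The running time amortizes over rule-combination attempts: each choice of a binary rule $(\NT{a} \ruleto \NT{b}\ \NT{c}) \in \Productions$ with a triple $(m, o, n) \in \Nodes^3$ is examined only a constant number of times, and since $\abs{\Productions} \leq \abs{\NonTerminals}^3$, the total saturation cost is $\BigO(\abs{\NonTerminals}^3 \abs{\Nodes}^3) = \BigO((\abs{\NonTerminals}\abs{\Nodes})^3)$, matching the second summand. Because every binary rule in $\Anno\Productions\Graph$ is emitted exactly once during this phase, and by Lemma~\ref{lem:graphanno_sizes} their total number is within the same budget, the emission work is absorbed into the saturation cost.

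The main obstacle I expect is the data-structure work that underpins the amortized bound: the worklist loop must enumerate the sets $\{o \mid \ANT{\NT{b}}{n}{o} \in \Anno\NonTerminals\Graph\}$ and $\{o \mid \ANT{\NT{b}}{o}{m} \in \Anno\NonTerminals\Graph\}$ in time proportional to their sizes, and must test membership of $\ANT{\NT{c}}{m}{o}$ in $\BigO(1)$. I would secure this by maintaining, for each pair $(\NT{b}, n)$ and each pair $(\NT{b}, m)$, an incrementally-updated adjacency list of endpoints together with a two-dimensional boolean membership table per non-terminal in $\NonTerminals$; initializing these structures costs $\BigO(\abs{\NonTerminals}\abs{\Nodes}^2)$, comfortably within the stated budget, after which each worklist step runs at the amortized cost the analysis requires.
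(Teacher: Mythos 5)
Your proposal is correct and takes essentially the same route as the paper: the paper's proof invokes the context-free recognizer for graphs of Hellings (cited as the source of the $\BigO(\abs{\NonTerminals}\abs{\Transitions} + (\abs{\NonTerminals}\abs{\Nodes})^3)$ bound) to compute $\Anno\NonTerminals\Graph$ and then emits $\Anno\Productions\Graph$ with constant-time lookup structures, bounded via Lemma~\ref{lem:graphanno_sizes} --- which is exactly the worklist saturation plus rule emission and counting argument you spell out inline. The only nit is that a binary rule can be emitted twice (once when each of its two body non-terminals is popped while the other is already marked), so ``emitted exactly once'' needs a constant-time deduplication or tie-breaking convention; this affects neither correctness (since $\Anno\Productions\Graph$ is a set) nor the stated bound.
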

\begin{proof}
We use the context-free recognizer for graphs of Hellings~\cite{icdt2014} to construct the set $\Anno\NonTerminals\Graph = \{ \ANT{\NT{a}}{m}{n} \mid \Lang(\Grammar; \NT{a}) \intersect \Lang(\Graph; m, n) \neq \emptyset \}$ in $\BigO(\abs{\NonTerminals}\abs{\Transitions} + (\abs{\NonTerminals}\abs{\Nodes})^3)$. Using $\Anno\NonTerminals\Graph$, we can construct
\begin{align*}
\Anno\Productions\Graph ={}& \{ (\ANT{\NT{a}}{m}{n} \ruleto \symbol) \mid (\NT{a} \ruleto \symbol) \in \Productions \land (m, \symbol, n) \in \Transitions \} \union{}\\
                         &\{ (\ANT{\NT{a}}{m}{n} \ruleto \ANT{\NT{b}}{m}{o}\ \ANT{\NT{c}}{o}{n}) \mid (\NT{a} \ruleto \NT{b}\ \NT{c}) \in \Productions \land \ANT{\NT{a}}{m}{n}, \ANT{\NT{b}}{m}{o}, \ANT{\NT{c}}{o}{n} \in \Anno\NonTerminals\Graph \}.
\end{align*}
For the construction of $\Anno\Productions\Graph$, we represent $\Anno\NonTerminals\Graph$ by a 3-dimensional boolean matrix and the set of production rules $\Productions$ by two look-up structures (one for rules of the form $\NT{a}\ruleto \symbol$ and one for rules of the form $\NT{a} \ruleto \NT{b}\ \NT{c}$). These structures guarantee constant-time lookups for all the parts used in the definition of $\Anno\Productions\Graph$. By the worst-case upper bounds on $\abs{\Anno\Productions\Graph}$, we conclude that we can construct $\Anno\Productions\Graph$ in $\BigO(\abs{\NonTerminals}\abs{\Transitions} + (\abs{\NonTerminals}\abs{\Nodes})^3)$.
\end{proof}

Observe that Theorem~\ref{thm:anno_grammar} also proves Lemma~\ref{lem:cfgintgraph}, and it does so by a direct context-free grammar construction. Usually, Lemma~\ref{lem:cfgintgraph} is proven indirectly by using a pushdown automaton-based construction~\cite{flbook}. Our direct approach to proving Lemma~\ref{lem:cfgintgraph} is essential; indeed, it is the direct construction of a context-free grammar in our proof that allows us to guarantee the structural properties in the result that we need for the derivation of paths.

\section{Answering queries using single-path query semantics}\label{sec:single_path}
Although querying for all paths can be useful in certain cases, it is often sufficient if the query answer contains a single such path: a single path is much easier to comprehend by end users and can already reveal the information end users are looking for. As the length of these paths is not necessarily upper bounded, a logical choice would be to choose a path that is as short as possible. Preferring such a path of minimal length over longer paths can provide additional practical value, as the following example illustrates.

\begin{example}
Recall Example~\ref{exam:social_network}. If Alice used the query $\NT{q}$ to find out how she can get in contact with Eve via friends, friends of friends, and so on, then she probably wants to contact Eve without contacting to many other people. Hence, the provided answer via Bob and Dan is not optimal. The path $\textbf{Alice}\ \textit{friendOf}\ \textbf{Craig}\ \textit{friendOf}\ \textbf{Eve}$ is shorter, and using this path Alice can get in contact with Eva by only contacting Craig.
\end{example}

Towards answering context-free path queries with a single path of minimum length, we proceed in two steps. In Section~\ref{ss:mincfg}, we develop an approach to derive a string of minimum length from a context-free grammar. In Section~\ref{ss:constructpaths}, we apply this approach to derive strings of minimum length to annotated grammars, hence showing how to answer context-free path queries with a path of minimum length.

\subsection{Construction of strings of minimum length}\label{ss:mincfg}
The goal of this section is to provide an approach to finding a string of minimum length in a language defined by a context-free grammar.

\begin{definition}
If $\Lang$ is a language, then the \emph{min-length} of the language $\Lang$, denoted by $\mlanglen{\Lang}$, is defined by $\mlanglen{\Lang} = \min\{ \slen{\String} \mid \String \in \Lang \}$.
\end{definition}

Mclean et al.~\cite{shorttermstring} showed that a string of minimum length in a context-free language can be computed. Their results do, however, not give a practical algorithm or complexity results for deriving strings of minimum length. Towards such a derivation algorithm, we introduce derivations using deterministic non-recursive production rules:
\begin{definition}
Let $\Productions$ be a set of production rules. We define $\heads(\Productions) = \{ \NT{a} \mid (\NT{a} \ruleto \String) \in \Productions \}$.  We define the set of non-terminals derivable from $\NT{a}$ using the production rules in $\Productions$, denoted by $\NTS{\NT{a}}{\Productions}$, as $\NTS{\NT{a}}{\Productions} = \{ \NT{b} \mid \NT{b} \in \NonTerminals \land \exists \String_1\exists \String_2\ \NT{a} \pproduces{\Productions} \String_1 \concat \NT{b} \concat \String_2 \}$. A set of production rules $\Productions$ is \emph{non-recursive} if, for every $\NT{a} \in \heads(\Productions)$, we have $\NT{a} \notin \NTS{\NT{a}}{\Productions}$. A set of production rules $\Productions$ is \emph{deterministic non-recursive} if it is non-recursive; if, for every $\NT{a} \in \heads(\Productions)$, there exists exactly one $(\NT{a} \ruleto \String) \in \Productions$; and if $\NT{a} \in \heads(\Productions)$ implies that there exists a string $\String \in \transitive{\Alphabet}$ such that $\NT{a} \produces{\Productions} \String$.
\end{definition}

Observe that a deterministic non-recursive set $\Productions$ does not provide choices in how one rewrites a non-terminal $\NT{a}$ into a string. As a consequence, $\Productions$ rewrites each $\NT{a} \in \heads(\Productions)$ into a unique string $\String \in \transitive{\Alphabet}$. In this setting, we define $\NDString{\NT{a}}{\Productions} = \String$.

\begin{example}\label{exam:det_nrec}
Recall Example~\ref{exam:social_network}. The following set of production rules in the annotated grammar is deterministic non-recursive:
\begin{align*}
    \ANT{\NT{q}}{\text{A}}{\text{B}} &\ruleto \textit{friendOf}, &
    \ANT{\NT{q}}{\text{A}}{\text{C}} &\ruleto \textit{friendOf},\\
    \ANT{\NT{q}}{\text{B}}{\text{D}} &\ruleto \textit{friendOf},&
    \ANT{\NT{q}}{\text{C}}{\text{E}} &\ruleto \textit{friendOf},\\
    \ANT{\NT{q}}{\text{D}}{\text{E}} &\ruleto \textit{friendOf},&
    \ANT{\NT{q}}{\text{A}}{\text{D}} &\ruleto \ANT{\NT{q}}{\text{A}}{\text{B}}\ \ANT{\NT{q}}{\text{B}}{\text{D}},\\
    \ANT{\NT{q}}{\text{A}}{\text{E}} &\ruleto \ANT{\NT{q}}{\text{A}}{\text{B}}\ \ANT{\NT{q}}{\text{B}}{\text{E}},&
    \ANT{\NT{q}}{\text{B}}{\text{E}} &\ruleto \ANT{\NT{q}}{\text{B}}{\text{D}}\ \ANT{\NT{q}}{\text{D}}{\text{E}}.
\end{align*}
\end{example}

\begin{lemma}\label{lem:unique_deriv_min}
Let $\Grammar = (\NonTerminals, \Alphabet, \Productions)$ be a context-free grammar, and let  $\NT{a} \in \NonTerminals$ be a non-terminal with $\Lang(\Grammar; \NT{a}) \neq \emptyset$. There exists a deterministic non-recursive set $\Productions' \subseteq \Productions$ such that $\slen{\NDString{\NT{a}}{\Productions'}} = \mlanglen{\Lang(\Grammar; \NT{a})}$.
\end{lemma}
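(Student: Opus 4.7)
The plan is to construct $\Productions'$ by picking, for each non-terminal whose language is non-empty, a single production rule that witnesses the minimum derivation length. Let $S = \{\NT{b} \in \NonTerminals \mid \Lang(\Grammar; \NT{b}) \neq \emptyset\}$ and, for each $\NT{b} \in S$, let $\ell(\NT{b}) = \mlanglen{\Lang(\Grammar; \NT{b})}$. Since the grammar excludes the derivation of $\EmptyString$, we have $\ell(\NT{b}) \geq 1$ for every $\NT{b} \in S$. The standard Bellman-like recurrence then holds: either $\ell(\NT{b}) = 1$ and $(\NT{b} \ruleto \symbol) \in \Productions$ for some $\symbol \in \Alphabet$, or $\ell(\NT{b}) \geq 2$ and the minimum is attained by some binary rule $(\NT{b} \ruleto \NT{c}\ \NT{d}) \in \Productions$ with $\NT{c}, \NT{d} \in S$ and $\ell(\NT{c}) + \ell(\NT{d}) = \ell(\NT{b})$. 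For each $\NT{b} \in S$, I would fix one such witnessing rule $r_\NT{b}$, and set $\Productions' = \{ r_\NT{b} \mid \NT{b} \in S \}$. By construction $\heads(\Productions') = S$ and $\Productions'$ contains exactly one rule per head, so the uniqueness condition of deterministic non-recursiveness is immediate, as is the fact that $\NT{a} \in \heads(\Productions')$.

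The key step is verifying non-recursiveness and the existence of a terminal derivation from every head. The crucial observation is that whenever $r_\NT{b}$ has the form $\NT{b} \ruleto \NT{c}\ \NT{d}$, we have $\ell(\NT{c}), \ell(\NT{d}) \geq 1$ together with $\ell(\NT{c}) + \ell(\NT{d}) = \ell(\NT{b})$, so both $\ell(\NT{c}) < \ell(\NT{b})$ and $\ell(\NT{d}) < \ell(\NT{b})$. Consequently every non-terminal appearing on the right-hand side of a rule in $\Productions'$ that is reachable from $\NT{b}$ has a strictly smaller $\ell$-value, and a straightforward induction on $\ell$ shows that all elements of $\NTS{\NT{b}}{\Productions'}$ have $\ell$-value strictly less than $\ell(\NT{b})$; in particular, $\NT{b} \notin \NTS{\NT{b}}{\Productions'}$. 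The same induction on $\ell(\NT{b})$ delivers a terminal string $\String$ with $\NT{b} \produces{\Productions'} \String$: in the base case $\ell(\NT{b}) = 1$ the rule $r_\NT{b}$ is a terminal rule, and in the inductive case $r_\NT{b} = (\NT{b} \ruleto \NT{c}\ \NT{d})$ rewrites $\NT{b}$ to two non-terminals of smaller $\ell$-value, each of which derives a terminal string by hypothesis. This establishes that $\Productions'$ is deterministic non-recursive.

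Finally, the same induction on $\ell(\NT{b})$ yields $\slen{\NDString{\NT{b}}{\Productions'}} = \ell(\NT{b})$ for every $\NT{b} \in S$, so in particular $\slen{\NDString{\NT{a}}{\Productions'}} = \ell(\NT{a}) = \mlanglen{\Lang(\Grammar;\NT{a})}$. The only real obstacle is justifying the Bellman-like recurrence for $\ell$ on which everything hinges, and in particular that the binary case has both $\NT{c}, \NT{d} \in S$. This is not hard: any derivation of minimum length from $\NT{b}$ must start with some rule in $\Productions$, and if that rule is binary, the two sub-derivations are themselves terminal strings in the respective sub-languages, hence both sub-languages are non-empty and the sum of their min-lengths is at most $\ell(\NT{b})$; combined with the obvious lower bound $\ell(\NT{b}) \leq \ell(\NT{c}) + \ell(\NT{d})$ for any binary rule with $\NT{c}, \NT{d} \in S$, this gives equality.
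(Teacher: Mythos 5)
Your proof is correct, but it takes a genuinely different route from the paper. The paper argues by surgery on a fixed minimum-length derivation of some string witnessing $\mlanglen{\Lang(\Grammar;\NT{a})}$: recursive sub-derivations $\NT{b} \pproduces{\Productions} \String_1 \concat \NT{b} \concat \String_2$ must have $\String_1 = \String_2 = \EmptyString$ by minimality and can be cut out, and when two distinct rules rewrite the same non-terminal, minimality forces both to yield strings of equal length, so one rule can be reused for both occurrences; the surviving rules form the desired deterministic non-recursive set. You instead work globally: you define $\ell(\NT{b}) = \mlanglen{\Lang(\Grammar;\NT{b})}$ on all productive non-terminals, establish the Bellman-style recurrence (terminal rule when $\ell(\NT{b})=1$, otherwise a binary rule $\NT{b}\ruleto\NT{c}\ \NT{d}$ with $\ell(\NT{b})=\ell(\NT{c})+\ell(\NT{d})$ and both children productive), pick one witnessing rule per head, and verify non-recursiveness, productivity of every head, and $\slen{\NDString{\NT{b}}{\Productions'}} = \ell(\NT{b})$ by induction on $\ell$, using that $\ell \geq 1$ forces a strict decrease across binary rules. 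Your justification of the recurrence (factoring a minimum derivation through its first rule, plus the converse inequality) is sound, so there is no gap. What your approach buys: it immediately yields a single minimizing set that is simultaneously optimal for \emph{all} productive non-terminals, i.e.\ it proves Corollary~\ref{cor:unique_deriv_min} directly rather than just the single-non-terminal statement, and it is essentially the correctness skeleton of Algorithm~\ref{alg:detnonrec} (the Dijkstra-like invariants in Proposition~\ref{prop:detnonrec}). What the paper's argument buys is brevity and locality: it needs no global induction on min-lengths, only the two minimality observations about one concrete derivation.
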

\begin{proof}[Proof (sketch).]
Let $\String$ be a string with $\NT{a} \produces{\Productions} \String$ and $\slen{\String} = \mlanglen{\Lang(\Grammar; \NT{a})}$. Consider the derivation of $\String$. If the derivation $\NT{a} \produces{\Productions} \String$ has a sequence of rewrite steps $ \NT{b} \pproduces{\Productions} \String_1 \concat \NT{b} \concat \String_2$, then, due to $\String$ having minimum length, we must have $\String_1 = \String_2 = \EmptyString$. Hence, we can remove all the rewrite steps involved in $\NT{b} \pproduces{\Productions} \String_1 \concat \NT{b} \concat \String_2$  and use the rewrite steps used to rewrite $\NT{b}$ in $\String_1 \concat \NT{b} \concat \String_2$ instead. If the derivation $\NT{a} \produces{\Productions} \String$ uses distinct production rules $\NT{c} \ruleto \String_1$ and $\NT{c} \ruleto \String_2$, then, due to $\String$ having minimum length, both $\String_1$ and $\String_2$ are rewritten into equal length strings in $\transitive{\Alphabet}$. Hence, we can choose one of the two production rules and use it for both rewrites of $\NT{c}$, the resulting string will have the same length as $\String$.
\end{proof}

\begin{corollary}\label{cor:unique_deriv_min}
Let $\Grammar = (\NonTerminals, \Alphabet, \Productions)$ be a context-free grammar. There exists a deterministic non-recursive set $\Productions' \subseteq \Productions$ such that for every non-terminal $\NT{a} \in \NonTerminals$ with $\Lang(\Grammar; \NT{a}) \neq \emptyset$, we have $\slen{\NDString{\NT{a}}{\Productions'}} = \mlanglen{\Lang(\Grammar; \NT{a})}$.
\end{corollary}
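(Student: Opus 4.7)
The plan is to build $\Productions'$ via a single simultaneous greedy choice rather than by naively combining the per-non-terminal witnesses Lemma~\ref{lem:unique_deriv_min} supplies. For every $\NT{a} \in \NonTerminals$ with $\Lang(\Grammar; \NT{a}) \neq \emptyset$, write $\mu(\NT{a}) = \mlanglen{\Lang(\Grammar; \NT{a})}$ and pick exactly one rule from $\Productions$ with head $\NT{a}$: some $(\NT{a} \ruleto \symbol) \in \Productions$ when $\mu(\NT{a}) = 1$, or otherwise some $(\NT{a} \ruleto \NT{b}\ \NT{c}) \in \Productions$ for which $\Lang(\Grammar; \NT{b})$ and $\Lang(\Grammar; \NT{c})$ are both nonempty and $\mu(\NT{b}) + \mu(\NT{c}) = \mu(\NT{a})$. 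Let $\Productions'$ be the set of these picks.

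First I would verify this selection is well-defined and that the resulting $\Productions'$ is deterministic non-recursive. Well-definedness follows from a first-step decomposition analogous to the proof sketch of Lemma~\ref{lem:unique_deriv_min}: any derivation $\NT{a} \produces{\Productions} \String$ with $\slen{\String} = \mu(\NT{a})$ starts either with some $\NT{a} \ruleto \symbol$ (forcing $\mu(\NT{a}) = 1$) or with some $\NT{a} \ruleto \NT{b}\ \NT{c}$, in which case $\String$ splits uniquely as $\String_\NT{b} \concat \String_\NT{c}$ with $\NT{b} \produces{\Productions} \String_\NT{b}$ and $\NT{c} \produces{\Productions} \String_\NT{c}$; minimality of $\slen{\String}$ then forces $\slen{\String_\NT{b}} = \mu(\NT{b})$ and $\slen{\String_\NT{c}} = \mu(\NT{c})$, and hence $\mu(\NT{b}) + \mu(\NT{c}) = \mu(\NT{a})$. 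Determinism and the ``exactly one rule per head'' requirement are immediate by construction. For non-recursiveness, the exclusion of $\EmptyString$-productions forces $\mu(\NT{b}) \geq 1$ for every non-terminal with nonempty language, so every binary rule $(\NT{a} \ruleto \NT{b}\ \NT{c}) \in \Productions'$ satisfies $\mu(\NT{b}), \mu(\NT{c}) < \mu(\NT{a})$; the ``occurs on the right-hand side of'' relation induced by $\Productions'$ therefore strictly decreases $\mu$, so its transitive closure is irreflexive, ruling out any $\NT{a} \pproduces{\Productions'} \String_1 \concat \NT{a} \concat \String_2$.

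Finally, induction on $\mu(\NT{a})$ gives $\slen{\NDString{\NT{a}}{\Productions'}} = \mu(\NT{a})$: the base case $\mu(\NT{a}) = 1$ is immediate, and the inductive step simply concatenates $\NDString{\NT{b}}{\Productions'}$ and $\NDString{\NT{c}}{\Productions'}$, whose lengths are $\mu(\NT{b})$ and $\mu(\NT{c})$ by hypothesis. The step I expect to be the main obstacle is the well-definedness of the simultaneous choice: naively unioning the sets $\Productions'_\NT{a}$ supplied by Lemma~\ref{lem:unique_deriv_min} can introduce competing rules with the same head or re-introduce recursive dependencies across different non-terminals, and the first-step decomposition above is exactly what makes a globally consistent choice possible.
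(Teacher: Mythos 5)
Your proof is correct. It does not, however, follow the route the paper intends: the corollary is stated without its own proof and is meant to follow from Lemma~\ref{lem:unique_deriv_min}, whose derivation-surgery argument (collapsing cycles $\NT{b} \pproduces{\Productions} \String_1 \concat \NT{b} \concat \String_2$ and unifying competing rules) would have to be carried out simultaneously for all non-terminals --- exactly the global-consistency issue you flag as the main obstacle. You instead bypass the lemma entirely and build the set directly from the min-length function $\mu$: pick for each head one rule realizing $\mu$, justify that such a rule exists by a first-step decomposition of a minimum-length derivation, get non-recursiveness from the strict decrease of $\mu$ along right-hand sides (which indeed relies on the Chomsky-Normal-Form assumption excluding $\EmptyString$-rules, so that $\mu \geq 1$ everywhere --- the same assumption the paper itself needs for Algorithm~\ref{alg:detnonrec}), and establish $\slen{\NDString{\NT{a}}{\Productions'}} = \mu(\NT{a})$ by induction on $\mu$. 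This is essentially the declarative counterpart of the paper's later algorithmic argument: Algorithm~\ref{alg:detnonrec} with Proposition~\ref{prop:detnonrec} computes precisely such a per-head choice of cost-realizing rules, with its loop invariants playing the role of your induction. What your route buys is a self-contained existence proof that makes the simultaneity explicit rather than leaving it implicit in the passage from lemma to corollary; what the paper's route buys is economy, reusing the single-non-terminal lemma and deferring the constructive content to the algorithm.
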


We say that a deterministic non-recursive set satisfying the conditions of Corollary~\ref{cor:unique_deriv_min} is \emph{minimizing}. Corollary~\ref{cor:unique_deriv_min} does not imply that each deterministic non-recursive set always produces strings of minimum length. With an example, we show that this is not the case:

\begin{example}\label{exam:det_nrec_min}
Recall Example~\ref{exam:det_nrec}.  The provided set of deterministic non-recursive production rules $\Productions'$ is not minimizing. We have $\slen{\NDString{\ANT{\NT{q}}{\text{A}}{\text{E}}}{ \Productions'}} = 3$, while a shorter string of length two exists (via Craig). By replacing the production rule for $\ANT{\NT{q}}{\text{Alice}}{\text{Eve}}$ by $\ANT{\NT{q}}{\text{Alice}}{\text{Eve}} \ruleto \ANT{\NT{q}}{\text{Alice}}{\text{Craig}}\ \ANT{\NT{q}}{\text{Craig}}{\text{Eve}}$, the resulting set of production rules is minimizing.
\end{example}

Given a minimizing set of production rules $\Productions'$, it is straightforward to produce a string of minimum length for each non-terminal $\NT{a} \in \heads(\Productions')$ that has such a string. The worst-case complexity of producing these strings is dominated by the length of the produced strings. By using the restrictions put on deterministic non-recursive sets, we can provide the following worst-case upper bounds on the length of strings of minimal length:

\begin{proposition}\label{prop:cfg_minlen_upperbound}
Let $\Grammar = (\NonTerminals, \Alphabet, \Productions)$ be a context-free grammar and let $N = \{ \NT{a} \mid \NT{a} \in \NonTerminals \land \Lang(\Grammar; \NT{a}) \neq \emptyset \}$ be the set of non-terminals that define a non-empty language.  We have $\max_{\NT{a} \in N} (\mlanglen{\Lang(\Grammar; \NT{a})}) \leq 2^{\abs{\NonTerminals} - 1}$ and $\sum_{\NT{a} \in N} (\mlanglen{\Lang(\Grammar; \NT{a})}) \leq 2^{\abs{\NonTerminals}} - 1$.
\end{proposition}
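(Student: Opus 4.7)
The plan is to leverage Corollary~\ref{cor:unique_deriv_min}: let $\Productions' \subseteq \Productions$ be a minimizing deterministic non-recursive set, so that for every $\NT{a} \in N$ we have $\slen{\NDString{\NT{a}}{\Productions'}} = \mlanglen{\Lang(\Grammar; \NT{a})}$. Since $\Productions'$ is minimizing, every $\NT{a} \in N$ belongs to $\heads(\Productions')$, hence $\abs{\heads(\Productions')} \geq \abs{N}$; trivially $\abs{\heads(\Productions')} \leq \abs{\NonTerminals}$. Both bounds in the proposition will be derived from a single length bound on $\slen{\NDString{\NT{a}}{\Productions'}}$ in terms of the position of $\NT{a}$ in a topological ordering of $\heads(\Productions')$.

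The key step is to exploit the non-recursive property to topologically order $\heads(\Productions') = \{\NT{a}_1, \dots, \NT{a}_k\}$, where $k \leq \abs{\NonTerminals}$, in such a way that the unique production with head $\NT{a}_i$ only mentions non-terminals among $\NT{a}_1, \dots, \NT{a}_{i-1}$. Such an ordering exists because non-recursiveness makes the ``uses'' relation on $\heads(\Productions')$ a strict partial order (if $\NT{a}$ uses $\NT{b}$ and $\NT{b}$ uses $\NT{a}$, then $\NT{a} \in \NTS{\NT{a}}{\Productions'}$, contradicting non-recursiveness). I would then prove by induction on $i$ that $\slen{\NDString{\NT{a}_i}{\Productions'}} \leq 2^{i-1}$: if the unique rule is $\NT{a}_i \ruleto \symbol$ then the derived string has length $1 \leq 2^{i-1}$; if it is $\NT{a}_i \ruleto \NT{a}_j\, \NT{a}_{j'}$ with $j, j' < i$, then by the inductive hypothesis the two derived substrings have length at most $2^{j-1}$ and $2^{j'-1}$, each at most $2^{i-2}$, so the total length is at most $2^{i-1}$.

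Given this inductive bound, both inequalities of the proposition follow immediately. For the maximum: every $\NT{a} \in N$ equals some $\NT{a}_i$ with $i \leq \abs{\NonTerminals}$, so
\[
\mlanglen{\Lang(\Grammar; \NT{a})} = \slen{\NDString{\NT{a}_i}{\Productions'}} \leq 2^{i-1} \leq 2^{\abs{\NonTerminals}-1}.
\]
For the sum, using $\abs{N} \leq k \leq \abs{\NonTerminals}$ together with the geometric series,
\[
\sum_{\NT{a} \in N} \mlanglen{\Lang(\Grammar; \NT{a})} \leq \sum_{i=1}^{k} 2^{i-1} = 2^k - 1 \leq 2^{\abs{\NonTerminals}} - 1.
\]

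The main obstacle, and the one place where care is needed, is justifying the topological ordering and the fact that $N \subseteq \heads(\Productions')$; everything else reduces to the standard doubling recurrence. The bound $2^{i-1}$ is tight in spirit, since a chain of rules $\NT{a}_i \ruleto \NT{a}_{i-1}\, \NT{a}_{i-1}$ with $\NT{a}_1 \ruleto \symbol$ achieves equality and saturates the geometric sum, which explains why the stated exponential bounds cannot be improved in general.
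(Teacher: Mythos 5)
Your proof is correct and follows essentially the same route as the paper's own sketch: reduce via Corollary~\ref{cor:unique_deriv_min} to a minimizing deterministic non-recursive set, order its heads topologically, and apply the doubling recurrence to get $2^{i-1}$ per non-terminal and $2^{\abs{\NonTerminals}}-1$ for the sum, with the chain $\NT{a}_j \ruleto \NT{a}_{j-1}\ \NT{a}_{j-1}$ showing tightness. You merely spell out the induction that the paper leaves implicit, which is a fine elaboration rather than a different argument.
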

\begin{proof}[Proof (sketch).]
Let $\abs{\NonTerminals} = i$ and let $\symbol \in \Alphabet$. We only have to consider the case where $\Productions$ is a deterministic non-recursive set. Hence, we can order the non-terminals such that $\NonTerminals = \{ \NT{a}_0, \dots, \NT{a}_{i-1} \}$ and we use the production rules $\NT{a}_0 \ruleto \symbol$ and $\NT{a}_j \ruleto \NT{a}_{j-1}\ \NT{a}_{j-1}$, for all $1 \leq j \leq i-1$.
\end{proof}

As there exists a straightforward procedure to efficiently construct strings of minimum length from a minimizing set of production rules, we only need a procedure to construct such a set of production rules for a given a context-free grammar. Algorithm~\ref{alg:detnonrec} provides such a procedure.

\begin{algorithm}[ht!]
\caption{Construct a minimizing set of production rules for $\Grammar = (\NonTerminals, \Alphabet, \Productions)$}\label{alg:detnonrec}
\begin{algorithmic}[1]
\STATE $\Productions', \VAR{cost} \GETS \text{empty mapping}, \text{empty mapping}$
\STATE $\VAR{new}$ is a min-priority queue
\FORALL{$(\NT{a} \ruleto \symbol) \in \Productions$}\label{alg:detnonrec:init}
    \IF{$\NT{a} \notin \VAR{cost}$}
        \STATE $\VAR{cost}[\NT{a}], \Productions'[\NT{a}] \GETS 1, (\NT{a} \ruleto \symbol)$
        \STATE add $\NT{a}$ to $\VAR{new}$ with priority $1$
    \ENDIF
\ENDFOR
\WHILE{$\VAR{new} \neq \emptyset$}\label{alg:detnonrec:main}
    \STATE take $\NT{a}$ with minimum priority in $\VAR{new}$, and remove it from $\VAR{new}$
    \FORALL{$(\NT{c} \ruleto \NT{a}\ \NT{b}) \in \Productions$ with $\NT{b} \in \VAR{cost}$}\label{alg:detnonrec:first}
        \STATE $\textsc{produce}(\NT{c} \ruleto \NT{a}\ \NT{b})$
    \ENDFOR
    \FORALL{$(\NT{c} \ruleto \NT{b}\ \NT{a}) \in \Productions$ with $\NT{b} \in \VAR{cost}$}\label{alg:detnonrec:second}
        \STATE $\textsc{produce}(\NT{c} \ruleto \NT{b}\ \NT{a})$
    \ENDFOR
\ENDWHILE
\RETURN $\{ \Productions'[\NT{a}] \mid \NT{a} \in \Productions'\}$
\PROCEDURE{$\textsc{produce}(\NT{d} \ruleto \NT{e}\ \NT{f})$}
    \IF{$\NT{d} \notin \VAR{cost}$}
        \STATE $\VAR{cost}[\NT{d}], \Productions'[\NT{d}] \GETS  \VAR{cost}[\NT{e}] + \VAR{cost}[\NT{f}], (\NT{d} \ruleto \NT{e}\ \NT{f})$
        \STATE add $\NT{d}$ to $\VAR{new}$ with priority $\VAR{cost}[\NT{e}] + \VAR{cost}[\NT{f}]$
    \ELSIF{$\VAR{cost}[\NT{d}] > \VAR{cost}[\NT{e}] + \VAR{cost}[\NT{f}]$}
        \STATE $\VAR{cost}[\NT{d}], \Productions'[\NT{d}] \GETS \VAR{cost}[\NT{e}] + \VAR{cost}[\NT{f}], (\NT{d} \ruleto \NT{e}\ \NT{f})$
        \STATE lower priority of $\NT{d}$ in $\VAR{new}$ to $\VAR{cost}[\NT{e}] + \VAR{cost}[\NT{f}]$
    \ENDIF
\end{algorithmic}
\end{algorithm}

\begin{proposition}\label{prop:detnonrec}
Let $\Grammar = (\NonTerminals, \Alphabet, \Productions)$ be a context-free grammar. Algorithm~\ref{alg:detnonrec} applied on $\Grammar$ produces a minimizing set of production rules for $\Grammar$. 
\end{proposition}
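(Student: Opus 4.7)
The plan is to recognize Algorithm~\ref{alg:detnonrec} as a Dijkstra-style shortest-path computation in which the ``vertices'' are non-terminals and the ``distance'' of $\NT{a}$ is $c_a^* := \mlanglen{\Lang(\Grammar; \NT{a})}$. The initialization loop handles base distances via rules $\NT{a} \ruleto \symbol$ (cost $1$), while the main loop relaxes each binary rule $\NT{c} \ruleto \NT{e}\ \NT{f}$ through the assignment $\VAR{cost}[\NT{c}] \gets \VAR{cost}[\NT{e}] + \VAR{cost}[\NT{f}]$ once both operands are known, which is precisely the guard $\NT{b} \in \VAR{cost}$ in the inner loops after each pop.

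I would first establish three monotonicity facts: (i) any value stored in $\VAR{cost}[\NT{a}]$ is the length of an actual $\NT{a}$-derivation in $\Grammar$, so $\VAR{cost}[\NT{a}] \geq c_a^*$ whenever it is defined; (ii) $\VAR{cost}[\NT{a}]$ never increases; and (iii) the sequence of costs at which non-terminals are popped is non-decreasing, because every \textsc{produce} call issued while processing a pop of $\NT{a}$ at cost $c_a$ yields a new candidate cost of the form $c_a + \VAR{cost}[\NT{b}] \geq c_a + 1$. A clean consequence of (iii) is that once $\NT{a}$ has been removed from $\VAR{new}$ no later call can lower $\VAR{cost}[\NT{a}]$, so the ``lower priority'' operation is in effect never invoked on an element already outside the queue.

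The heart of the proof is the Dijkstra-style invariant: whenever $\NT{a}$ is popped, $\VAR{cost}[\NT{a}] = c_a^*$. I would prove this by strong induction on $c_a^*$. The base $c_a^* = 1$ is forced by some rule $\NT{a} \ruleto \symbol$ and is handled by the initialization loop. For $c_a^* = k > 1$, pick any optimal rule $\NT{a} \ruleto \NT{b}\ \NT{c}$ with $c_b^* + c_c^* = k$; since all costs are at least $1$, both $c_b^*$ and $c_c^*$ are strictly less than $k$. By the induction hypothesis, $\NT{b}$ and $\NT{c}$ are eventually popped at their correct costs, and the pop of the later of the two triggers $\textsc{produce}(\NT{a} \ruleto \NT{b}\ \NT{c})$ (or its symmetric variant), forcing $\VAR{cost}[\NT{a}] \leq k$. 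Combined with (i) this pins $\VAR{cost}[\NT{a}] = k$, and since any earlier pop had cost strictly below $k$ while $\VAR{cost}[\NT{a}] \geq k$ at all times, $\NT{a}$'s eventual pop occurs at cost exactly $k$.

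Once this invariant is in place, the rest is routine. By construction $\Productions'$ assigns to each non-terminal it covers exactly one rule; for any rule $\NT{a} \ruleto \NT{b}\ \NT{c} \in \Productions'$ we have $\VAR{cost}[\NT{b}], \VAR{cost}[\NT{c}] < \VAR{cost}[\NT{a}]$, so $\Productions'$ is deterministic, non-recursive, and every derivation from $\NT{a}$ bottoms out in a rule of the form $\NT{a} \ruleto \symbol$. A simple induction on $\VAR{cost}[\NT{a}]$ then gives $\slen{\NDString{\NT{a}}{\Productions'}} = \VAR{cost}[\NT{a}] = c_a^*$ for every $\NT{a} \in \heads(\Productions')$, matching the criterion in Corollary~\ref{cor:unique_deriv_min} and establishing that $\Productions'$ is minimizing. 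The main obstacle is the strong induction in the third step, and specifically the argument that $\NT{a}$ cannot be popped before both $\NT{b}$ and $\NT{c}$ have delivered their final costs; this relies essentially on excluding $\EmptyString$-productions so that all costs are bounded below by $1$.
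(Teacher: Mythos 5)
Your proposal is correct and follows essentially the same route as the paper: the paper's proof sketch is a list of loop invariants (non-decreasing pop priorities, popped non-terminals carrying cost equal to $\mlanglen{\Lang(\Grammar;\NT{a})}$, and the structural properties of $\Productions'$ such as strict cost decrease along selected rules), which is exactly the Dijkstra-style argument you give, merely repackaged as a strong induction on the optimal length instead of invariants indexed by the last popped priority. Your explicit observations that all costs are bounded below by $1$ (so the argument hinges on excluding $\EmptyString$-productions) and that decrease-key is never applied to an already-popped element match remarks the paper makes around Proposition~\ref{prop:detnonrec}, so there is no substantive difference in approach.
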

\begin{proof}[Proof (sketch)]
The main \emph{while}-loop maintains the following invariants:
\begin{enumerate}
\item \label{prop:detnonrec:complete} If $\NT{a} \in \Productions'$ and $\Productions'[\NT{a}] = (\NT{a} \ruleto \NT{b}\ \NT{c})$, then $\NT{b}, \NT{c} \in \Productions'$.

\item \label{prop:detnonrec:costfun} If $\NT{a} \in \Productions'$ and $\Productions'[\NT{a}] = (\NT{a} \ruleto \NT{b}\ \NT{c})$, then  $\VAR{cost}[\NT{a}] \geq \VAR{cost}[\NT{b}] + \VAR{cost}[\NT{c}]$, $\VAR{cost}[\NT{a}] > \VAR{cost}[\NT{b}]$, and $\VAR{cost}[\NT{a}] > \VAR{cost}[\NT{c}]$.

\item \label{prop:detnonrec:coststr} If $\NT{a} \in \Productions'$, then $\slen{\NDString{\NT{a}}{S}} \leq \VAR{cost}[\NT{a}]$.

\item \label{prop:detnonrec:orderins} Let $m$ be the priority of the last element removed from $\VAR{new}$. No new element is inserted in $\VAR{new}$ with priority less than or equal to $m$.

\item \label{prop:detnonrec:minlength}  Let $m$ be the priority of the last element removed from $\VAR{new}$. For all $\NT{a} \in \NonTerminals$ with $\mlanglen{\Lang(\Grammar; \NT{a})} \leq m$, we have $\VAR{cost}[\NT{a}] = \mlanglen{\Lang(\Grammar; \NT{a})}$.
\end{enumerate}
As each non-terminal is added to $\VAR{new}$ at most once, Algorithm~\ref{alg:detnonrec} terminates. At termination, Invariants~\ref{prop:detnonrec:complete}--\ref{prop:detnonrec:minlength} guarantee that the resulting set of production rules $\{ \Productions'[\NT{a}] \mid \NT{a} \in \Productions'\}$ is minimizing.
\end{proof}

We observe that we cannot straightforwardly generalize Algorithm~\ref{alg:detnonrec} to the setting that includes production rules of the form $\NT{a} \ruleto \EmptyString$: Invariants~\ref{prop:detnonrec:costfun},~\ref{prop:detnonrec:orderins}, and~\ref{prop:detnonrec:minlength} of the proof of Proposition~\ref{prop:detnonrec} no longer hold (as they all require a strict ordering of the $\VAR{cost}$ of non-terminals). This issue can be resolved by maintaining a timestamp on each non-terminal (such that a non-terminal has timestamp $i$ if it was the $i$-th change to $\Productions'$), and change the relevant invariants to not require a strict ordering on the $\VAR{cost}$ of non-terminals, but a strict ordering on the pairs $(\VAR{cost}, \text{timestamp})$ of non-terminals.

\begin{theorem}\label{thm:detnonrec}
Let $\Grammar = (\NonTerminals, \Alphabet, \Productions)$ be a context-free grammar. Algorithm~\ref{alg:detnonrec} constructs a minimizing set of production rules for $\Grammar$ in $\BigO(\abs{\NonTerminals}(\abs{\NonTerminals} \log(\abs{\NonTerminals}) + \abs{\Productions}))$.
\end{theorem}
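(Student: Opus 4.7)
The plan is to bound the total work of Algorithm~\ref{alg:detnonrec} iteration by iteration, exploiting the key structural fact (already used in the proof of Proposition~\ref{prop:detnonrec}) that each non-terminal enters the priority queue $\VAR{new}$ at most once and is removed at most once. Consequently, the main \emph{while}-loop iterates at most $\abs{\NonTerminals}$ times, the total number of extract-min operations is at most $\abs{\NonTerminals}$, and, with a binary-heap realization of $\VAR{new}$, these extract-mins together contribute $\BigO(\abs{\NonTerminals} \log \abs{\NonTerminals})$ work. The initialization block iterates once through $\Productions$, performs at most $\abs{\NonTerminals}$ priority-queue insertions, and so contributes $\BigO(\abs{\Productions} + \abs{\NonTerminals} \log \abs{\NonTerminals})$.

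For the cost of a single iteration of the main loop I would split the analysis in two pieces. First, represent $\VAR{cost}$ as an array indexed by non-terminal, so that membership and lookup are constant time; then the two \emph{forall}-loops scan through at most $\abs{\Productions}$ productions in $\BigO(\abs{\Productions})$ time. Second, each scanned production may trigger a \textsc{produce} call whose only non-trivial work is a decrease-key or insertion, at cost $\BigO(\log \abs{\NonTerminals})$; since \textsc{produce} only ever updates the recorded cost of a left-hand side $\NT{c}$, and there are only $\abs{\NonTerminals}$ distinct left-hand sides, at most $\abs{\NonTerminals}$ effective priority-queue updates can happen per iteration, contributing $\BigO(\abs{\NonTerminals} \log \abs{\NonTerminals})$.

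Summing the per-iteration cost $\BigO(\abs{\Productions} + \abs{\NonTerminals} \log \abs{\NonTerminals})$ across the at most $\abs{\NonTerminals}$ iterations, and absorbing the initialization and extract-min totals, yields the claimed bound $\BigO(\abs{\NonTerminals}(\abs{\NonTerminals} \log \abs{\NonTerminals} + \abs{\Productions}))$. The main obstacle is the priority-queue accounting: a naive count that simply charges each \textsc{produce} call $\BigO(\log \abs{\NonTerminals})$ without bounding the number of distinct-LHS updates per iteration would introduce an extra $\abs{\Productions} \log \abs{\NonTerminals}$ term per iteration, so it is essential to observe that each iteration can only update the cost of at most $\abs{\NonTerminals}$ distinct non-terminals; combined with constant-time $\VAR{cost}$ lookups this yields the stated multiplicative form.
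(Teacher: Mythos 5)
The step that fails is your key claim that within a single iteration of the \emph{while}-loop ``at most $\abs{\NonTerminals}$ effective priority-queue updates can happen'' because there are only $\abs{\NonTerminals}$ distinct left-hand sides. The algorithm does not batch updates per head: if the extracted non-terminal $\NT{a}$ occurs on the right-hand side of many rules with the \emph{same} head, say $\NT{c} \ruleto \NT{a}\ \NT{b}_1, \dots, \NT{c} \ruleto \NT{a}\ \NT{b}_k$, and these are scanned in order of decreasing $\VAR{cost}[\NT{b}_i]$, then $\textsc{produce}$ performs $k$ separate insert/decrease-key operations on the single key $\NT{c}$ within that one iteration. So the number of heap updates in an iteration is bounded only by the number of scanned productions, which can be of order $\abs{\Productions}$, not by $\abs{\NonTerminals}$. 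With that claim removed, your per-iteration accounting gives $\BigO(\abs{\Productions}\log\abs{\NonTerminals})$ per iteration, hence $\BigO(\abs{\NonTerminals}\abs{\Productions}\log\abs{\NonTerminals})$ overall, which overshoots the stated bound---exactly the failure mode you yourself said must be avoided.

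The repair, and the paper's actual route, is to count heap operations globally rather than per iteration: each production $\NT{c} \ruleto \NT{e}\ \NT{f}$ is scanned at most twice during the \emph{entire} execution (once when $\NT{e}$ is removed from $\VAR{new}$ and once when $\NT{f}$ is, and every non-terminal is removed at most once), so there are at most $2\abs{\Productions}$ calls to $\textsc{produce}$ in total, i.e.\ at most $\abs{\NonTerminals}$ inserts and removals plus $2\abs{\Productions}$ decrease-keys overall. The paper charges these with a Fibonacci heap (amortized constant decrease-key), obtaining only $\BigO(\abs{\NonTerminals}\log\abs{\NonTerminals} + \abs{\Productions})$ heap operations; even a binary heap's $\BigO(\abs{\Productions}\log\abs{\NonTerminals})$ would fit under the theorem's bound. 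Note also that in the paper the outer factor $\abs{\NonTerminals}$ does \emph{not} come from multiplying a per-iteration cost by the number of iterations: it comes from the bit-length of the priorities, which by Proposition~\ref{prop:cfg_minlen_upperbound} can reach $2^{\abs{\NonTerminals}-1}$ and hence need $\abs{\NonTerminals}$ bits, so every comparison or addition of costs is charged $\BigO(\abs{\NonTerminals})$. Your analysis silently assumes unit-cost arithmetic on these potentially exponential quantities; in that model the claimed bound would still be a (very loose) upper bound once the counting above is repaired, but the paper's multiplicative $\abs{\NonTerminals}$ is there precisely to pay for the large integers, not for the loop structure.
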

\begin{proof}
We represent $\VAR{costs}$ as an array holding $\abs{\NonTerminals}$ integers. The costs used in $\VAR{cost}$ and $\VAR{new}$ are integers in the range $1, \dots, 2^{\abs{\NonTerminals} - 1}$. We can represent each of these integers using $\log(2^{\abs{\NonTerminals}}) = \abs{\NonTerminals}$ bits.  The initialization steps perform $\BigO(\abs{\Productions})$ steps. The \emph{while}-loop will, in the worst case, visit every non-terminal once. For each of these non-terminals, one insertion into and one removal from the priority queue $\VAR{new}$ is performed. The inner \emph{for}-loops will visit every production rule twice, causing at most $2\abs{\Productions}$ decrease key operations on priority queue $\VAR{new}$. When using a Fibonacci heap for a priority queue holding at most $e$ elements, each insert and removal costs $\BigO(\log(e))$ and each decrease key operation costs an amortized $\BigO(1)$ heap operations~\cite{intro_algo,fibheap}. Hence, a total of $\BigO(\abs{\NonTerminals} \log(\abs{\NonTerminals}) + \abs{\Productions})$ heap operations are performed. Taking the size of the integers representing priorities into account, the heap operations cost $\BigO(\abs{\NonTerminals}(\abs{\NonTerminals} \log(\abs{\NonTerminals}) + \abs{\Productions}))$.
\end{proof}

Using Theorem~\ref{thm:detnonrec} and Proposition~\ref{prop:cfg_minlen_upperbound}, we conclude the following:

\begin{corollary}
Let $\Grammar = (\NonTerminals, \Alphabet, \Productions)$ be a context-free grammar, let $N = \{ \NT{a} \mid \NT{a} \in \NonTerminals \land \Lang(\Grammar; \NT{a}) \neq \emptyset \}$ be the set of non-terminals that define a non-empty language, and let $L = \sum_{\NT{a} \in N} \mlanglen{\Lang(\Grammar, \NT{a})}$ be the combined length of a string of minimum length for each non-terminal in $N$. We can construct strings of minimum length for all non-terminals in $N$ in $\BigO(\abs{\NonTerminals}(\abs{\NonTerminals}\log(\abs{\NonTerminals}) + \abs{\Productions}) + L) = \BigO(\abs{\NonTerminals}(\abs{\NonTerminals}\log(\abs{\NonTerminals}) + \abs{\Productions}) + 2^{\abs{\NonTerminals}})$.
\end{corollary}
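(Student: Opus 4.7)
The plan is to combine Theorem~\ref{thm:detnonrec} with a straightforward linear-time expansion. First, I would apply Algorithm~\ref{alg:detnonrec} to $\Grammar$ to obtain a minimizing set of deterministic non-recursive production rules $\Productions'$, which by Theorem~\ref{thm:detnonrec} takes $\BigO(\abs{\NonTerminals}(\abs{\NonTerminals}\log(\abs{\NonTerminals}) + \abs{\Productions}))$. A short inspection of the algorithm shows that a non-terminal $\NT{a}$ enters $\Productions'$ exactly when it can derive some terminal string, that is, when $\Lang(\Grammar; \NT{a}) \neq \emptyset$, so $\heads(\Productions') = N$.

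Second, for every $\NT{a} \in N$, I would emit $\NDString{\NT{a}}{\Productions'}$ by a recursive expansion: look up the unique rule $\Productions'[\NT{a}]$, output $\symbol$ if the rule is of the form $\NT{a} \ruleto \symbol$, and otherwise recurse on $\NT{b}$ and then on $\NT{c}$ when the rule is of the form $\NT{a} \ruleto \NT{b}\ \NT{c}$. Non-recursiveness of $\Productions'$ guarantees termination, and Corollary~\ref{cor:unique_deriv_min} guarantees that the produced string has length exactly $\mlanglen{\Lang(\Grammar; \NT{a})}$. Since the grammar is in Chomsky Normal Form, the derivation tree of a string of length $k$ contains exactly $2k - 1$ rewrites; with $\Productions'$ stored as an array indexed by non-terminal so that each lookup is $\BigO(1)$, each expansion therefore costs $\BigO(\mlanglen{\Lang(\Grammar; \NT{a})})$. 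Summing over $\NT{a} \in N$ gives a total expansion cost of $\BigO(L)$, and adding the construction cost of $\Productions'$ yields the first stated bound. The alternative form follows immediately by invoking Proposition~\ref{prop:cfg_minlen_upperbound}, which gives $L \leq 2^{\abs{\NonTerminals}} - 1$, and substituting into the first bound.

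The statement is essentially a packaging of previously established results, so no serious obstacle is expected. The main care-points are verifying that $\heads(\Productions') = N$ and that the data structures admit constant-time access to $\Productions'[\NT{a}]$ during the recursive expansion; both are routine bookkeeping, with the first reducing to the standard emptiness-of-language computation implicit in Algorithm~\ref{alg:detnonrec}.
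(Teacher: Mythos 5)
Your proposal is correct and follows exactly the route the paper intends: construct a minimizing set via Algorithm~\ref{alg:detnonrec} (Theorem~\ref{thm:detnonrec}), expand each non-terminal in time proportional to the length of its minimum-length string so the total expansion cost is $\BigO(L)$, and bound $L$ by $2^{\abs{\NonTerminals}}-1$ using Proposition~\ref{prop:cfg_minlen_upperbound}. Your explicit checks that $\heads(\Productions') = N$ and that lookups are constant-time are exactly the ``straightforward procedure'' the paper alludes to, so there is nothing to add.
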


\subsection{Construction of paths of minimum length}\label{ss:constructpaths}
We can already answer queries with single paths of minimum length by first constructing an annotated grammar and then applying Algorithm~\ref{alg:detnonrec}. This approach  has high overhead due to the explicit construction and storing of the annotated grammar. To reduce this overhead, we adapt Algorithm~\ref{alg:detnonrec} to the setting of query evaluation using the single-path query semantics. The resulting algorithm, Algorithm~\ref{alg:detnonrecag}, operates on a normal context-free grammar and a graph, and  derives the necessary details of the annotated grammar in place. If necessary, Algorithm~\ref{alg:detnonrecag} can use straightforward bookkeeping to also construct  $\Anno\NonTerminals\Graph$ and $\Anno\Productions\Graph$, this without increasing the asymptotic complexity of the algorithm.

\begin{algorithm}[ht!]
\caption{Construct a minimizing set of production rules for the annotated grammar over $\Grammar = (\NonTerminals, \Alphabet, \Productions)$ and $\Graph = (\Nodes, \Alphabet, \Transitions)$}\label{alg:detnonrecag}
\begin{algorithmic}[1]
\STATE $\Productions', \VAR{cost} \GETS \text{empty mapping}, \text{empty mapping}$
\STATE $\VAR{new}$ is a min-priority queue
\FORALL{$(\NT{a} \ruleto \symbol) \in \Productions$ and $(m, \symbol, n) \in \Transitions$}\label{alg:detnonrecag:init}
    \IF{$\ANT{\NT{a}}{m}{n} \notin \VAR{cost}$}
        \STATE $\VAR{cost}[\ANT{\NT{a}}{m}{n}], \Productions'[\ANT{\NT{a}}{m}{n}] \GETS 1, (\ANT{\NT{a}}{m}{n} \ruleto \symbol)$
        \STATE add $\ANT{\NT{a}}{m}{n}$ to $\VAR{new}$ with priority $1$
    \ENDIF
\ENDFOR
\WHILE{$\VAR{new} \neq \emptyset$}\label{alg:detnonrecag:main}
    \STATE take $\ANT{\NT{a}}{m}{n}$ with minimum priority in $\VAR{new}$, and remove it from $\VAR{new}$
    \FORALL{$(\NT{c} \ruleto \NT{a}\ \NT{b}) \in \Productions$ with $\ANT{\NT{b}}{n}{o} \in \VAR{cost}$}\label{alg:detnonrecag:first}
        \STATE $\textsc{produce}(\ANT{\NT{c}}{m}{o} \ruleto \ANT{\NT{a}}{m}{n}\ \ANT{\NT{b}}{n}{o})$
    \ENDFOR
    \FORALL{$(\NT{c} \ruleto \NT{b}\ \NT{a}) \in \Productions$ with $\ANT{\NT{b}}{o}{m} \in \VAR{cost}$}\label{alg:detnonrecag:second}
        \STATE $\textsc{produce}(\ANT{\NT{c}}{o}{n} \ruleto \ANT{\NT{b}}{o}{m}\ \ANT{\NT{a}}{m}{n})$
    \ENDFOR
\ENDWHILE
\RETURN $\{ \Productions'[\ANT{\NT{a}}{m}{n}] \mid \ANT{\NT{a}}{m}{n} \in \Productions'\}$
\PROCEDURE{$\textsc{produce}(\ANT{\NT{d}}{u}{w} \ruleto \ANT{\NT{e}}{u}{v}\ \ANT{\NT{f}}{v}{w})$}
    \IF{$\ANT{\NT{d}}{u}{w} \notin \VAR{cost}$}
        \STATE $\VAR{cost}[\ANT{\NT{d}}{u}{w}], \Productions'[\ANT{\NT{d}}{u}{w}] \GETS  \VAR{cost}[\ANT{\NT{e}}{u}{v}] + \VAR{cost}[\ANT{\NT{f}}{v}{w}], (\ANT{\NT{d}}{u}{w} \ruleto \ANT{\NT{e}}{u}{v}\ \ANT{\NT{f}}{v}{w})$
        \STATE add $\ANT{\NT{d}}{u}{w}$ to $\VAR{new}$ with priority $\VAR{cost}[\ANT{\NT{e}}{u}{v}] + \VAR{cost}[\ANT{\NT{f}}{v}{w}]$
    \ELSIF{$\VAR{cost}[\ANT{\NT{d}}{u}{w}] > \VAR{cost}[\ANT{\NT{e}}{u}{v}] + \VAR{cost}[\ANT{\NT{f}}{v}{w}]$}
        \STATE $\VAR{cost}[\ANT{\NT{d}}{u}{w}], \Productions'[\ANT{\NT{d}}{u}{w}] \GETS \VAR{cost}[\ANT{\NT{e}}{u}{v}] + \VAR{cost}[\ANT{\NT{f}}{v}{w}], (\ANT{\NT{d}}{u}{w} \ruleto \ANT{\NT{e}}{u}{v}\ \ANT{\NT{f}}{v}{w})$
        \STATE lower priority of $\ANT{\NT{d}}{u}{w}$ in $\VAR{new}$ to $\VAR{cost}[\ANT{\NT{e}}{u}{v}] + \VAR{cost}[\ANT{\NT{f}}{v}{w}]$
    \ENDIF
\end{algorithmic}
\end{algorithm}

Before we fully analyze Algorithm~\ref{alg:detnonrecag}, we use Lemma~\ref{lem:graphanno_sizes} and Proposition~\ref{prop:cfg_minlen_upperbound} to conclude the following naive worst-case upper bound on the length of paths of minimal length:

\begin{corollary}\label{cor:nav_cfg_minlen_cfgintfa}
Let $\Grammar = (\NonTerminals, \Alphabet, \Productions)$ be a context-free grammar with $\NT{a} \in \NonTerminals$ and let $\Graph = (\Nodes, \Alphabet, \Transitions)$ be a graph with $m, n \in \Nodes$, such that $\Lang = \Lang(\Grammar; \NT{a}) \intersect \Lang(\Graph; m, n) \neq \emptyset$. We have $\mlanglen{\Lang} \leq 2^{\abs{\NonTerminals}\abs{\Nodes}^2 - 1}$.
\end{corollary}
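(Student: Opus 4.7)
The plan is to chain together the two cited results via the annotated grammar. First I would consider the annotated grammar $\Anno\Grammar\Graph = (\Anno\NonTerminals\Graph, \Alphabet, \Anno\Productions\Graph)$ over $(\Grammar, \Graph)$, whose existence and structural properties are given by Definition~\ref{def:anno_grammar} and Theorem~\ref{thm:anno_grammar}. Since $\Lang = \Lang(\Grammar; \NT{a}) \intersect \Lang(\Graph; m, n) \neq \emptyset$, condition~\ref{def:anno_grammar:nonterminals} of Definition~\ref{def:anno_grammar} guarantees that $\ANT{\NT{a}}{m}{n} \in \Anno\NonTerminals\Graph$, and the proposition following Example~\ref{exam:social_network} (relating derivations from $\ANT{\NT{a}}{m}{n}$ to paths $m\Path n$ with $\Trace(\Path) \in \Lang(\Grammar; \NT{a})$) yields the key identity $\Lang(\Anno\Grammar\Graph; \ANT{\NT{a}}{m}{n}) = \Lang$. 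In particular, $\mlanglen{\Lang} = \mlanglen{\Lang(\Anno\Grammar\Graph; \ANT{\NT{a}}{m}{n})}$.

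Next I would apply Proposition~\ref{prop:cfg_minlen_upperbound} to the annotated grammar $\Anno\Grammar\Graph$ at the non-terminal $\ANT{\NT{a}}{m}{n}$, which is non-empty by the above. This gives
\[\mlanglen{\Lang(\Anno\Grammar\Graph; \ANT{\NT{a}}{m}{n})} \leq 2^{\abs{\Anno\NonTerminals\Graph} - 1}.\]
Finally I would substitute the bound from Lemma~\ref{lem:graphanno_sizes}, namely $\abs{\Anno\NonTerminals\Graph} \leq \abs{\NonTerminals}\abs{\Nodes}^2$, to conclude that $\mlanglen{\Lang} \leq 2^{\abs{\NonTerminals}\abs{\Nodes}^2 - 1}$, as required.

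The argument is essentially a direct composition of the two cited results once the identification $\Lang = \Lang(\Anno\Grammar\Graph; \ANT{\NT{a}}{m}{n})$ is in place, so there is no real obstacle; the only thing worth spelling out carefully is this identification, which is a straightforward consequence of the stated proposition that links derivations from annotated non-terminals to paths in the graph together with their traces.
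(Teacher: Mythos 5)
Your proposal is correct and matches the paper's own argument: the paper derives this corollary exactly by applying Proposition~\ref{prop:cfg_minlen_upperbound} to the annotated grammar and substituting the bound $\abs{\Anno\NonTerminals\Graph} \leq \abs{\NonTerminals}\abs{\Nodes}^2$ from Lemma~\ref{lem:graphanno_sizes}. Your explicit justification of the identification $\Lang = \Lang(\Anno\Grammar\Graph; \ANT{\NT{a}}{m}{n})$ via the proposition on path derivations is a detail the paper leaves implicit, but it is the same route.
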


Using Corollary~\ref{cor:nav_cfg_minlen_cfgintfa}, we conclude the following:

\begin{proposition}
Let $\Grammar = (\NonTerminals, \Alphabet, \Productions)$ be a context-free grammar and let $\Graph = (\Nodes, \Alphabet, \Transitions)$ be a graph. Algorithm~\ref{alg:detnonrecag} constructs a minimizing set of production rules for the annotated grammar $\Anno\Grammar\Graph = (\Anno\NonTerminals\Graph, \Alphabet, \Anno\Productions\Graph)$ over $(\Grammar, \Graph)$ in $
\BigO(\abs{\Anno\NonTerminals\Graph}(\abs{\Anno\NonTerminals\Graph} \log(\abs{\Anno\NonTerminals\Graph}) + \abs{\Anno\Productions\Graph}))$, and, hence, in $\BigO(\abs{\NonTerminals}\abs{\Nodes}^2((\abs{\NonTerminals}\abs{\Nodes}^2)\log(\abs{\NonTerminals}\abs{\Nodes}^2) + \abs{\Productions}\abs{\Nodes}^3 + \min(\abs{\NonTerminals},\abs{\Productions})\abs{\Transitions}))$.
\end{proposition}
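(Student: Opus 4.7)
The plan is to reduce both the correctness and the running-time analysis to the corresponding statements for Algorithm~\ref{alg:detnonrec} applied to the annotated grammar $\Anno\Grammar\Graph$, and then to substitute the worst-case bounds from Lemma~\ref{lem:graphanno_sizes}.

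First, I would argue that Algorithm~\ref{alg:detnonrecag} performs precisely the same bookkeeping as Algorithm~\ref{alg:detnonrec} would perform if it were executed on $\Anno\Grammar\Graph$, where the annotated non-terminals and production rules are produced on the fly. The initialization loop iterates over all $(\NT{a} \ruleto \symbol) \in \Productions$ paired with $(m, \symbol, n) \in \Transitions$; by Definition~\ref{def:anno_grammar}(\ref{def:anno_grammar:sproductions}), these are exactly the production rules of $\Anno\Productions\Graph$ of the form $\ANT{\NT{a}}{m}{n} \ruleto \symbol$. Similarly, the two inner \emph{for}-loops in the main \emph{while}-loop enumerate, for a removed annotated non-terminal $\ANT{\NT{a}}{m}{n}$, exactly those production rules of $\Anno\Productions\Graph$ of the form $\ANT{\NT{c}}{m}{o} \ruleto \ANT{\NT{a}}{m}{n}\ \ANT{\NT{b}}{n}{o}$ and $\ANT{\NT{c}}{o}{n} \ruleto \ANT{\NT{b}}{o}{m}\ \ANT{\NT{a}}{m}{n}$ whose other non-terminal already appears in $\VAR{cost}$. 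By Definition~\ref{def:anno_grammar}(\ref{def:anno_grammar:bproductions}), these are again exactly the annotated productions that Algorithm~\ref{alg:detnonrec} would visit. Hence, by Proposition~\ref{prop:detnonrec}, the returned set $\{\Productions'[\ANT{\NT{a}}{m}{n}] \mid \ANT{\NT{a}}{m}{n} \in \Productions'\}$ is a minimizing set of production rules for $\Anno\Grammar\Graph$.

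For the running time, I would reuse the Fibonacci heap analysis of Theorem~\ref{thm:detnonrec}, but instrumented with the sizes of the annotated grammar rather than with $\abs{\NonTerminals}$ and $\abs{\Productions}$. Each annotated non-terminal is inserted into and removed from $\VAR{new}$ at most once, contributing $\BigO(\abs{\Anno\NonTerminals\Graph}\log(\abs{\Anno\NonTerminals\Graph}))$ heap operations; each annotated production rule is considered at most twice by the inner \emph{for}-loops, contributing $\BigO(\abs{\Anno\Productions\Graph})$ amortized decrease-key operations. Taking into account that, by Corollary~\ref{cor:nav_cfg_minlen_cfgintfa}, priorities are integers in $\{1,\dots,2^{\abs{\Anno\NonTerminals\Graph}-1}\}$ and thus cost $\BigO(\abs{\Anno\NonTerminals\Graph})$ bits of arithmetic per operation, the total cost is $\BigO(\abs{\Anno\NonTerminals\Graph}(\abs{\Anno\NonTerminals\Graph}\log(\abs{\Anno\NonTerminals\Graph}) + \abs{\Anno\Productions\Graph}))$, which matches the first bound claimed.

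The main obstacle is to verify that the on-the-fly enumeration of annotated productions in lines~\ref{alg:detnonrecag:first} and~\ref{alg:detnonrecag:second} is both complete and non-duplicating with respect to what Algorithm~\ref{alg:detnonrec} would do on $\Anno\Grammar\Graph$: completeness because an annotated production $\ANT{\NT{c}}{u}{w} \ruleto \ANT{\NT{d}}{u}{v}\ \ANT{\NT{e}}{v}{w}$ can only be produced once both of $\ANT{\NT{d}}{u}{v}$ and $\ANT{\NT{e}}{v}{w}$ appear in $\VAR{cost}$, and non-duplication because the loops trigger on the later of the two to be removed from $\VAR{new}$. Finally, substituting the bounds $\abs{\Anno\NonTerminals\Graph} \leq \abs{\NonTerminals}\abs{\Nodes}^2$ and $\abs{\Anno\Productions\Graph} \leq \abs{\Productions}\abs{\Nodes}^3 + \min(\abs{\NonTerminals},\abs{\Productions})\abs{\Transitions}$ from Lemma~\ref{lem:graphanno_sizes} yields the second, fully expanded complexity bound.
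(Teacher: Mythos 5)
Your proposal is correct and follows exactly the reasoning the paper leaves implicit: Algorithm~\ref{alg:detnonrecag} is Algorithm~\ref{alg:detnonrec} run on the annotated grammar built in place, so correctness follows from Proposition~\ref{prop:detnonrec}, the heap analysis of Theorem~\ref{thm:detnonrec} carries over with $\abs{\Anno\NonTerminals\Graph}$ and $\abs{\Anno\Productions\Graph}$ (with Corollary~\ref{cor:nav_cfg_minlen_cfgintfa} bounding the priority bit-lengths), and Lemma~\ref{lem:graphanno_sizes} yields the expanded bound. No gaps beyond the level of detail the paper itself uses.
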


\begin{corollary}
Let $\Grammar = (\NonTerminals, \Alphabet, \Productions)$ be a context-free grammar with $\NT{a} \in \NonTerminals$ and let $\Graph = (\Nodes, \Alphabet, \Transitions)$ be a graph with $m, n \in \Nodes$, such that $\Lang = \Lang(\Grammar; \NT{a}) \intersect \Lang(\Graph; m, n) \neq \emptyset$. We can construct a path $m\Path n$ such that $\Trace(\Path) \in \Lang$ and $\slen{\Trace(\Path)} = \mlanglen{\Lang}$ in \begin{multline*}\BigO(\abs{\NonTerminals}\abs{\Nodes}^2((\abs{\NonTerminals}\abs{\Nodes}^2)\log(\abs{\NonTerminals}\abs{\Nodes}^2) + \abs{\Productions}\abs{\Nodes}^3 + \min(\abs{\NonTerminals},\abs{\Productions})\abs{\Transitions}) + \slen{\Trace(\Path)}) ={}\\ \BigO(\abs{\NonTerminals}\abs{\Nodes}^2((\abs{\NonTerminals}\abs{\Nodes}^2)\log(\abs{\NonTerminals}\abs{\Nodes}^2) + \abs{\Productions}\abs{\Nodes}^3 + \min(\abs{\NonTerminals},\abs{\Productions})\abs{\Transitions}) + 2^{\abs{\NonTerminals}\abs{\Nodes}^2 - 1}).\end{multline*}
\end{corollary}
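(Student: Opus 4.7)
The plan is to combine Algorithm~\ref{alg:detnonrecag} with a straightforward path-recording derivation pass. First I would apply Algorithm~\ref{alg:detnonrecag} to $(\Grammar, \Graph)$ to produce a minimizing set $\Productions'$ of production rules for the annotated grammar $\Anno\Grammar\Graph = (\Anno\NonTerminals\Graph, \Alphabet, \Anno\Productions\Graph)$. By the preceding proposition, this takes time within the first summand of the claimed bound. Since $\Lang \neq \emptyset$, Property~\ref{def:anno_grammar:nonterminals} of Definition~\ref{def:anno_grammar} guarantees $\ANT{\NT{a}}{m}{n} \in \Anno\NonTerminals\Graph$, and the definition of \emph{minimizing} then ensures that $\Productions'$ rewrites $\ANT{\NT{a}}{m}{n}$ into a unique terminal string of length $\mlanglen{\Lang(\Anno\Grammar\Graph; \ANT{\NT{a}}{m}{n})}$, which equals $\mlanglen{\Lang}$ by the correspondence built into $\Anno\Grammar\Graph$.

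Next I would extract an actual path. Because $\Productions'$ is deterministic non-recursive, a depth-first expansion starting at $\ANT{\NT{a}}{m}{n}$ and following the unique rule at each annotated non-terminal terminates and encounters the terminal rules $(\ANT{\NT{b}}{u}{v} \ruleto \symbol)$ in left-to-right order. For each such rule I would emit the edge $(u,\symbol,v) \in \Transitions$, whose existence is guaranteed by Property~\ref{def:anno_grammar:sproductions} of Definition~\ref{def:anno_grammar}. The characterization of path derivations for annotated grammars (the proposition immediately before Lemma~\ref{lem:graphanno_sizes}) then implies that the concatenation of these edges is a valid path $m\Path n$ in $\Graph$ with $\Trace(\Path) \in \Lang$ and $\slen{\Trace(\Path)} = \mlanglen{\Lang}$. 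Using the look-up structures already built by Algorithm~\ref{alg:detnonrecag}, this pass takes $\BigO(\slen{\Trace(\Path)})$ time, since each emitted edge corresponds to exactly one terminal rule in the derivation.

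Summing the two contributions yields the first form of the claimed bound; applying Corollary~\ref{cor:nav_cfg_minlen_cfgintfa} to replace $\slen{\Trace(\Path)}$ by the worst-case value $2^{\abs{\NonTerminals}\abs{\Nodes}^2 - 1}$ gives the second form. The substantive work—correctness and cost of Algorithm~\ref{alg:detnonrecag}, the path-encoding guarantee of annotated grammars, and the exponential upper bound on the length of min-length strings—has already been established in the earlier results, so I do not expect a serious obstacle here; this corollary is essentially a bookkeeping assembly of those components, and the only point that requires care is confirming that the path-recording pass does not inflate the asymptotic cost, which it does not.
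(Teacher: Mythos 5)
Your proposal is correct and matches the paper's intent: the corollary is stated without an explicit proof precisely because it follows, as you argue, by running Algorithm~\ref{alg:detnonrecag} (with the cost bound of the preceding proposition), expanding the minimizing deterministic non-recursive set from $\ANT{\NT{a}}{m}{n}$ to read off the path in $\BigO(\slen{\Trace(\Path)})$ time, and substituting the worst-case length bound of Corollary~\ref{cor:nav_cfg_minlen_cfgintfa}. Your bookkeeping observation that a Chomsky-Normal-Form derivation of a length-$k$ string uses $\BigO(k)$ rewrite steps, so the extraction pass does not affect the asymptotics, is exactly the point that makes the assembly go through.
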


Observe that the upper bound of Corollary~\ref{cor:nav_cfg_minlen_cfgintfa} is very loose: Proposition~\ref{prop:cfg_minlen_upperbound} depends on production rules of the form $\NT{a} \rightarrow \NT{b}\ \NT{b}$, whereas, in general, annotated grammars only allow for such production rules in very restricted cases. Hence, we look at ways to improve the worst-case upper bound observed by Corollary~\ref{cor:nav_cfg_minlen_cfgintfa}. As an initial step, we consider languages defined over singleton alphabets:

\begin{proposition}\label{prop:un_cfg_minlen_cfgintfa}
Let $\Alphabet$ be an alphabet with $\abs{\Alphabet} = 1$, let $\Grammar = (\NonTerminals, \Alphabet, \Productions)$ be a context-free grammar with $\NT{a} \in \NonTerminals$, and let $\Graph = (\Nodes, \Alphabet, \Transitions)$ be a graph with $m, n \in \Nodes$, such that $\Lang = \Lang(\Grammar; \NT{a}) \intersect \Lang(\Graph; m, n) \neq \emptyset$. In the worst case, we have $\abs{\Nodes}2^{\abs{\NonTerminals} - 1} \leq \mlanglen{\Lang} \leq \abs{\Nodes}(2^{2\abs{\NonTerminals} - 1} + 1)$.
\end{proposition}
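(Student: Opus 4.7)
The plan is to prove the lower and upper bounds separately: the lower bound by exhibiting an explicit family of instances, the upper bound by analysing both $\Lang(\Grammar; \NT{a})$ and $\Lang(\Graph; m, n)$ as semilinear subsets of $\mathbb{N}$, exploiting that over a singleton alphabet every CFL is regular and hence Parikh-semilinear.

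For the lower bound I would take the doubling grammar from the proof of Proposition~\ref{prop:cfg_minlen_upperbound}, extended with one additional recursive rule: $\NT{a}_0 \ruleto \symbol$; $\NT{a}_j \ruleto \NT{a}_{j-1}\ \NT{a}_{j-1}$ for $1 \le j \le |\NonTerminals|-1$; and $\NT{a}_{|\NonTerminals|-1} \ruleto \NT{a}_{|\NonTerminals|-1}\ \NT{a}_{|\NonTerminals|-1}$. A straightforward induction gives $\Lang(\Grammar; \NT{a}_{|\NonTerminals|-1}) = \{\symbol^{k \cdot 2^{|\NonTerminals|-1}} : k \ge 1\}$. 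For the graph I would take a directed cycle of odd length $|\Nodes|$ with every edge labelled $\symbol$ and set $m = n$, yielding $\Lang(\Graph; m, n) = \{\symbol^{\ell |\Nodes|} : \ell \ge 0\}$. Because $|\Nodes|$ is odd, $\gcd(|\Nodes|, 2^{|\NonTerminals|-1}) = 1$, so the intersection is exactly the multiples of $|\Nodes| \cdot 2^{|\NonTerminals|-1}$ and $\mlanglen{\Lang} = |\Nodes| \cdot 2^{|\NonTerminals|-1}$ as required.

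For the upper bound I would decompose $\Lang(\Grammar; \NT{a})$ as a finite set together with arithmetic progressions $L_0 + p \mathbb{N}$, where each period $p$ is the gcd of the lengths $|\String_1 \String_2|$ over minimal self-embedding derivations $\NT{b} \pproduces{\Productions} \String_1\ \NT{b}\ \String_2$ of the recursive non-terminals. Proposition~\ref{prop:cfg_minlen_upperbound} bounds each self-embedding length by $2^{|\NonTerminals|-1}$, so $p \le 2^{|\NonTerminals|-1}$, and the Frobenius bound on the numerical semigroup generated by the self-embedding lengths gives a pre-period $L_0 \le 2^{2(|\NonTerminals|-1)}$. Similarly the periodic part of $\Lang(\Graph; m, n)$ has period $q \le |\Nodes|$ (dividing the gcd of reachable cycle lengths) and pre-period $\le |\Nodes|^2$. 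For any pair of periodic components with non-empty intersection, the Chinese Remainder Theorem yields an intersection minimum at most $\max(L_0, |\Nodes|^2) + \mathrm{lcm}(p, q) \le 2^{2|\NonTerminals|-2} + |\Nodes|^2 + |\Nodes| \cdot 2^{|\NonTerminals|-1}$, which is bounded by the stated $|\Nodes|(2^{2|\NonTerminals|-1} + 1)$. The main obstacle will be handling the case when $\Lang(\Grammar; \NT{a})$ splits into several linear components with different residue classes and carefully combining the Frobenius pre-period bound across these classes so that the resulting sum cleanly absorbs into the claimed exponent.
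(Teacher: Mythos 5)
Your lower bound is fine and is in the same spirit as the paper's: the paper also uses the doubling grammar on a directed $\symbol$-labelled cycle with $m=n$, but places the extra recursive rule at $\NT{a}_0$ rather than at the top non-terminal. Your placement, which makes $\Lang(\Grammar;\NT{a}_{\abs{\NonTerminals}-1})$ exactly the multiples of $2^{\abs{\NonTerminals}-1}$, together with the coprimality forced by taking $\abs{\Nodes}$ odd, is precisely what makes the $\lcm$ argument airtight, at the mild cost of restricting to odd cycle lengths; this part I would accept.

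The upper bound, however, has a genuine gap, in two places. First, the quantitative structure you need for the unary context-free language --- period at most $2^{\abs{\NonTerminals}-1}$ and pre-period at most $2^{2(\abs{\NonTerminals}-1)}$ --- is exactly the hard content here: it is, in essence, the unary-CFG-to-NFA conversion of Pighizzini et al.\ that the paper invokes, and you assert it rather than prove it; your Frobenius sketch over self-embedding lengths does not control the sporadic elements across the different linear components, as you yourself flag. Second, and decisively, the final arithmetic fails. The pre-period of $\Lang(\Graph;m,n)$ can genuinely be $\Theta(\abs{\Nodes}^2)$ (two cycles of coprime lengths about $\abs{\Nodes}/2$ sharing a node), so your combination yields a bound of the shape $2^{2\abs{\NonTerminals}-2}+\abs{\Nodes}^2+\abs{\Nodes}2^{\abs{\NonTerminals}-1}$, and this is \emph{not} at most $\abs{\Nodes}(2^{2\abs{\NonTerminals}-1}+1)$ once $\abs{\Nodes}$ is large compared to $2^{\abs{\NonTerminals}}$: for $\abs{\NonTerminals}=1$ and $\abs{\Nodes}=100$ your expression exceeds $10^4$ while the claimed bound is $300$. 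Treating the two pre-periods additively is inherently too weak. The paper sidesteps this entirely: convert the grammar to an NFA with at most $2^{2\abs{\NonTerminals}-1}+1$ states, take the product with $\Graph$, and observe that a shortest accepted string is shorter than the number of product states, giving the multiplicative bound $\abs{\Nodes}(2^{2\abs{\NonTerminals}-1}+1)$ with no pre-period of either factor ever appearing. If you want to salvage the semilinear route, you would have to intersect structured descriptions in which tail and cycle are coupled (e.g., Chrobak normal form on the graph side), which essentially re-derives the product construction.
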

\begin{proof}
First, we prove the lower bound. Let $\Alphabet = \{ \symbol \}$, let $\Nodes = \{ n_0, \dots, n_{\abs{\Nodes}-1} \}$, let $\Transitions = \{ (n_i, \symbol, n_{i+1 \mod \abs{\Nodes}}) \mid 0 \leq i \leq \abs{\Nodes} - 1 \}$, let $\NonTerminals = \{ \NT{a}_0, \dots, \NT{a}_{\abs{\NonTerminals}-1} \}$, and let $\Productions = \{ \NT{a}_0 \ruleto \symbol, \NT{a}_0 \ruleto \NT{a}_0\ \NT{a}_0 \} \union \{ \NT{a}_i \ruleto \NT{a}_{i-1}\ \NT{a}_{i-1} \mid 1 \leq i \leq \abs{\NonTerminals} - 1 \}$.

With these definitions we have $\Lang(\Graph; n_i, n_i) = \{ \symbol^{k\abs{\Nodes}} \mid 0 \leq k \}$, for every $1 \leq i \leq \abs{\Nodes}$, and we have $\Lang(\Grammar; \NT{a}_0) = \{ \symbol^k \mid 1 \leq k \}$. Hence, the string $\String'$ of minimum length such that $\ANT{\NT{a}_0}{n_i}{n_i} \produces{\Productions} \String'$ is $\String' = \symbol^\abs{\Nodes}$. Each $\NT{a}_j$, $1 \leq j \leq \abs{\NonTerminals} - 1$, will be rewritten in a string of exactly $2^j$ $\NT{a}_0$ non-terminals. Hence, the string $\String'_j$ of minimum length such that $\ANT{\NT{a}_j}{n_i}{n_i} \produces{\Productions} \String'_j$ is $\String'_j =  \smash{\symbol^{\abs{\Nodes}2^j}}$, and we conclude $\slen{\String'_{\abs{\NonTerminals} - 1}} = \abs{\Nodes}2^{\abs{\NonTerminals} - 1}$.

The upper bound is proven using a result of Pighizzini et al.~\cite{pighizzini}: for each context-free grammar $\Grammar$ with $y$ non-terminals, there exists a finite automaton $\Graph' = (\Nodes', \Alphabet, \Transitions')$ with initial state $m$, final states $F$, and with $\abs{\Nodes'} \leq 2^{2\abs{\NonTerminals} - 1} + 1$ such that $(\bigcup_{n \in F} \Lang(\Graph; m, n)) = \Lang(\Grammar; \NT{a})$. We use $\Graph'$ to represent $\Grammar$ and we apply the well-known product construction for the intersection of finite automata on $\Graph'$ and $\Graph$. The resulting finite automaton has $\abs{\Nodes}\abs{\Nodes'} = \abs{\Nodes}(2^{2\abs{\NonTerminals} - 1} + 1)$ states, proving the upper bound.
\end{proof}

Due to Proposition~\ref{prop:un_cfg_minlen_cfgintfa}, we can conclude that in the case of unlabeled graphs, the complexity of query evaluation with the single-path query semantics is polynomial in terms of the graph size and exponential in terms of the query size. Observe, however, that the exponential complexity in terms of the query size follows straightforward from the succinctness of context-free grammars (as compared to regular expressions and finite automata).

In the labeled case, we can still use Proposition~\ref{prop:un_cfg_minlen_cfgintfa} to get worst-case lower bounds on $\mlanglen{\Lang(\Grammar; \NT{a}) \intersect \Lang(\Graph; m, n)}$. The worst-case upper bound provided by Proposition~\ref{prop:un_cfg_minlen_cfgintfa} can, however, not be generalized to arbitrary alphabets, which we show next.

\begin{proposition}\label{prop:cfg_minlen_cfgintfa}
Let $\Grammar = (\NonTerminals, \Alphabet, \Productions)$ be a context-free grammar with $\NT{a} \in \NonTerminals$, and let $\Graph = (\Nodes, \Alphabet, \Transitions)$ be a graph with $m, n \in \Nodes$, such that $\Lang = \Lang(\Grammar; \NT{a}) \intersect \Lang(\Graph; m, n) \neq \emptyset$. In the worst case, we have $\abs{\Nodes}^2 2^{\abs{\NonTerminals}} \leq 64\mlanglen{\Lang}$.
\end{proposition}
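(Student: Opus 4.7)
The plan is to construct a specific family of instances $(\Grammar, \Graph, \NT{s}, m, n)$ (taking the query non-terminal to be $\NT{s}$) that witnesses the claimed lower bound. The key insight is that a binary alphabet lets us combine the doubling-chain trick used in Proposition~\ref{prop:un_cfg_minlen_cfgintfa} (which will contribute the $2^{\abs{\NonTerminals}}$ factor) with a coprime-cycles trick in the graph (which will contribute the $\abs{\Nodes}^2$ factor). In the singleton-alphabet setting, a single cycle of length $\abs{\Nodes}$ can only yield one factor of $\abs{\Nodes}$, but with two alphabet symbols we can host two edge-disjoint cycles of coprime lengths inside the same $\abs{\Nodes}$-node graph and force the minimum path to simultaneously ``close'' both of them.

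Concretely, I would take $\Alphabet = \{a, b\}$ and pick coprime odd integers $p, q \geq 3$ with $p + q = \abs{\Nodes} + 1$ and $p, q$ near $\abs{\Nodes}/2$, so that $pq \geq \abs{\Nodes}^2/4$. The graph $\Graph$ is the union of an $a$-labeled directed cycle of length $p$ and a $b$-labeled directed cycle of length $q$ sharing a single node $m = n$; then $\Lang(\Graph; m, m)$ consists of all strings that decompose as an arbitrary concatenation of $a^p$-blocks and $b^q$-blocks. For the grammar, set $K = \abs{\NonTerminals} - 4$ and take non-terminals $\NonTerminals = \{\NT{A}_0, \dots, \NT{A}_K, \NT{B}, \NT{s}, \NT{T}\}$ with productions $\NT{A}_0 \ruleto a$; $\NT{A}_i \ruleto \NT{A}_{i-1}\ \NT{A}_{i-1}$ for $1 \leq i \leq K$; $\NT{B} \ruleto b$; $\NT{s} \ruleto \NT{A}_K\ \NT{B}$; $\NT{s} \ruleto \NT{A}_K\ \NT{T}$; and $\NT{T} \ruleto \NT{s}\ \NT{B}$. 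A straightforward induction on the number of applications of $\NT{s} \ruleto \NT{A}_K\ \NT{T}$ shows $\Lang(\Grammar; \NT{s}) = \{ a^{k \cdot 2^K}\ b^k \mid k \geq 1 \}$, using that the doubling chain $\NT{A}_K$ rewrites uniquely to $a^{2^K}$ and that $\NT{B}$ rewrites uniquely to $b$.

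Gluing the two halves together is then a divisibility argument. For a string $a^{k \cdot 2^K}\ b^k$ to lie in $\Lang(\Graph; m, m)$, any realising path must stay on the $a$-cycle for its entire $a$-prefix and return to $m$ before crossing to the $b$-cycle, which forces $p \mid k \cdot 2^K$ and $q \mid k$. Since $p$ is odd, $\gcd(p, 2^K) = 1$, so $p \mid k$; combined with $q \mid k$ and $\gcd(p, q) = 1$ we obtain $pq \mid k$. The minimum admissible $k$ is therefore $pq$, yielding $\mlanglen{\Lang} = pq\,(2^K + 1) \geq pq \cdot 2^K \geq (\abs{\Nodes}^2/4) \cdot 2^{\abs{\NonTerminals} - 4} = \abs{\Nodes}^2 \cdot 2^{\abs{\NonTerminals}} / 64$, which matches the claimed bound.

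The main technical obstacle I anticipate is the number-theoretic selection of $p$ and $q$: to reach $pq \geq \abs{\Nodes}^2/4$ we need coprime odd integers summing to $\abs{\Nodes} + 1$ whose product is close to the maximum $((\abs{\Nodes}+1)/2)^2$. For sufficiently large $\abs{\Nodes}$ this is routine, as coprime odd pairs near $\abs{\Nodes}/2$ are abundant, while for the finitely many small $\abs{\Nodes}$ or $\abs{\NonTerminals}$ that the construction does not immediately cover, the bound is either trivially satisfied or can be absorbed into the multiplicative constant $64$. Verifying that no shorter string lies in $\Lang$ is then routine given the rigidity of the $\NT{A}$-chain and the uniqueness of the $\NT{B}$-rewrite, and no other step of the plan presents a conceptual difficulty.
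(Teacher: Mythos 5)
Your construction is essentially the paper's own proof: the paper likewise uses the double-cyclic graph (two coprime-length cycles sharing a single node, with lengths $u=2^k+1$ and $v=2^k$) together with a grammar that pairs the language $\{\sigma_1^x\sigma_2^x \mid x \geq 1\}$ with a doubling chain of $\abs{\NonTerminals}-4$ non-terminals, so that coprimality forces about $\abs{\Nodes}^2/4$ cycle traversals and the chain contributes the factor $2^{\abs{\NonTerminals}-4}$, yielding the same $\abs{\Nodes}^2 2^{\abs{\NonTerminals}}/64$ bound. One small repair to your parameter choice: coprime \emph{odd} $p,q$ with $p+q=\abs{\Nodes}+1$ cannot exist when $\abs{\Nodes}$ is even, but only $p$ needs to be odd (to cancel the $2^K$ in $p \mid k2^K$), and in any case restricting to a suitable infinite family of node counts suffices for a worst-case lower bound, exactly as the paper does by only treating $\abs{\Nodes}=2^{k+1}$.
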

\begin{proof}
Choose the well-known context-free language $\Lang = \{ {\symbol_1}^k{\symbol_2}^k \mid 1 \leq k \}$. The context-free grammar $\Grammar = (\NonTerminals, \Alphabet, \Productions)$ with $\NonTerminals = \{ \NT{a}, \NT{a}', \NT{a}_1, \NT{a}_2, \NT{b}_1, \dots, \NT{b}_{\abs{\NonTerminals} - 4} \}$ and
\begin{align*}
\Productions ={}& \{ \NT{a} \ruleto \NT{a}_1\ \NT{a}_2, \NT{a} \ruleto \NT{a}_1\ \NT{a}', \NT{a}' \ruleto \NT{a}\ \NT{a}_2, \NT{a}_1 \ruleto \symbol_1, \NT{a}_2 \ruleto \symbol_2 \} \union{}\\
                &\{  \NT{b}_1 \ruleto \NT{a}\ \NT{a} \} \union \{ \NT{b}_j \ruleto \NT{b}_{j-1}\ \NT{b}_{j-1} \mid 1 < j \leq \abs{\NonTerminals} - 4 \}
\end{align*}
has $\Lang(\Grammar; \NT{a}) = \Lang$. Choose a $k$ with $1 \leq k$ and choose $\abs{Q} = u + v - 1$ with $u = 2^k + 1$ and $v = u - 1$. Let $\Graph = (\Nodes, \Alphabet, \Transitions)$ with $\Nodes = \{c, m_1, \dots, m_{u-1}, n_1, \dots n_{v-1} \}$ and
\begin{align*}
        \Transitions ={}& \{ (c, \symbol_1, m_1), (m_{u-1}, \symbol_1, c) \} \union \{ (m_i, \symbol_1, m_{i+1}) \mid 1 \leq i < u - 1 \} \union{}\\
                       &  \{ (c, \symbol_2, n_1), (n_{v - 1}, \symbol_2, c) \}  \union \{ (n_i, \symbol_2, n_{i+1}) \mid 1 \leq i < v - 1 \}.
\end{align*}
The resulting graph is visualized in Figure~\ref{fig:twocyclesparen}.
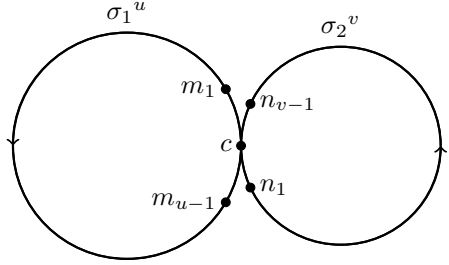
\begin{figure}[t!]
    \centering
    \begin{tikzpicture}[dot/.style={circle,scale=0.35,draw=black,fill=black},scale=0.75]
        \coordinate (l) at (2, 0);
        \coordinate (r) at (5.75, 0);
        
        \draw[->,thick] (0, 0) arc[radius=-2, start angle=0, end angle=360];
        \draw[->,thick] (7.5, 0) arc[radius=1.75, start angle=0, end angle=360];
        
        \draw[thick] (l) circle [radius=2];
        \draw[thick] (r) circle [radius=1.75];

        \node[above] at (2, 2) {${\symbol_1}^u$};
        \node[above] at (5.75, 1.75) {${\symbol_2}^v$};
        
         \node[dot] (mf) at ($(l)!1!30:(4, 0)$) {};
         \node[dot] (ml) at ($(l)!1!330:(4, 0)$) {};
         
         \node[dot] (nl) at ($(r)!1!155:(7.5, 0)$) {};
         \node[dot] (nf) at ($(r)!1!205:(7.5, 0)$) {};
        
        \node[dot] (c) at (4, 0) {};
        \node[left] at (c) {$c$};
        
        \node[left] at (mf) {$m_1$};
        \node[left] at (ml) {$m_{u-1}$};
        
        \node[right] at (nf) {$n_1$};
        \node[right] at (nl) {$n_{v-1}$};
    \end{tikzpicture}
    \caption{The double-cyclic graph: two cycles, one having $u$ edges labeled with $\symbol_1$, and one having $v$ edges labeled with $\symbol_2$. The two cycles are connected via a shared node $c$.}\label{fig:twocyclesparen}
\end{figure}

Let $c \Path c$ be a path in $\Graph$ with $\Trace(\Path) \in \Lang(\Grammar; \NT{a})$. Due to the definition of $\NT{a}$, we must have $\Trace(\Path) = \String_1 \concat \String_2$ with $\String_1 = {\symbol_1}^x$ and $\String_2 = {\symbol_2}^x$, for $1 \leq x$. Due to the structure of the graph, $\String_1$ must be the trace of a path $c\Path_1c$ and $\String_2$ must be the trace of a path $c\Path_2c$ in graph $\Graph$. From these constraints, we conclude $\Lang(\Grammar; \NT{a}) \intersect \Lang(\Graph; c, c) = \{ {\symbol_1}^{k\lcm(u, v)}{\symbol_2}^{k\lcm(u, v)} \mid 1 \leq k \}$. Observe that $u = 2^k$ and $v = 2^k+1$ are coprime, hence, we have $\lcm(u, v) = uv$.

Each $\NT{b}_j$, $1 \leq j \leq \abs{\NonTerminals} - 4$, will be rewritten in a string of exactly $2^j$ $\NT{a}$ non-terminals. As only node $c$ has outgoing edges labeled with both $\symbol_1$ and $\symbol_2$, each $\ANT{\NT{b}_j}{c}{c}$ will be rewritten in a string of exactly $2^j$ $\ANT{\NT{a}}{c}{c}$ non-terminals. Hence, we conclude $\mlanglen{\Lang(\Grammar; \NT{b}_j) \intersect \Lang(\Graph; c, c)} = uv2^j$, and we conclude \[\mlanglen{\Lang(\Grammar; \NT{b}_{\abs{\NonTerminals} - 4}) \intersect \Lang(\Graph; c, c)} = uv2^{\abs{\NonTerminals}-4} > v^2 \frac{2^{\abs{\NonTerminals}}}{16} = \left(\frac{\abs{\Nodes}}{2}\right)^2 \frac{2^{\abs{\NonTerminals}}}{16} = \frac{\abs{\Nodes}^2 2^{\abs{\NonTerminals}}}{64}.\qedhere\]
\end{proof}

For labeled graphs, we do not yet have a better worst-case upper bound than the naive upper-bound provided by Corollary~\ref{cor:nav_cfg_minlen_cfgintfa}.
\begin{problem}\label{prob:inter_re_cfg}
Let $\Grammar = (\NonTerminals, \Alphabet, \Productions)$ be a context-free grammar with $\NT{a} \in \NonTerminals$, and let $\Graph = (\Nodes, \Alphabet, \Transitions)$ be a graph with $m, n \in \Nodes$, such that $\Lang = \Lang(\Grammar; \NT{a}) \intersect \Lang(\Graph; m, n) \neq \emptyset$. What is the strict worst-case upper bound on $\mlanglen{\Lang}$? Or, equivalently, what is the strict worst-case upper bound on the length of a shortest string in the intersection of the language of a context-free grammar with $x$ non-terminals and the language of a finite automaton with $y$ states?
\end{problem}
 We conjecture that, as in the unlabeled case, the complexity of query evaluation with the single-path query semantics is polynomial in terms of the graph size and exponential in terms of the query size.

\section{Experimental results}\label{sec:exp}

To provide insight in the practical behavior of path-based query evaluation, we have implemented algorithms for the evaluation of queries using the single-path query semantics.\footnote{The algorithms are implemented in $\texttt{C++}$. Measurements where performed on a system with an Intel Core i5-4670 CPU, running at a maximum of 3.8GHz, and with 16GB of main memory. The source code will be made available under an open-source license.} We primarily focus on the running time of Algorithm~\ref{alg:detnonrecag}, as the cost of producing the paths of interest heavily depends on whether one wants to produce a path for a particular node pair or for all node pairs. We perform three different tests:
\begin{enumerate}
\item We compare two context-free grammars that both evaluate to the positive transitive closure (under the relational query semantics):
\begin{align*}
\NT{q}_1 &\ruleto \NT{a}\ \NT{q}_1,& \NT{q}_1 &\ruleto \symbol,& \NT{a}   &\ruleto \symbol;\\
\NT{q}_2 &\ruleto \NT{q}_2\ \NT{q}_2,&   \NT{q}_2 &\ruleto \symbol.
\end{align*}
Observe that the context-free grammar $\NT{q}_1$ is linear and non-ambiguous, whereas the context-free grammar with non-terminal $\NT{q}_2$ is non-linear and highly ambiguous. We measure the running time of constructing minimizing sets of annotated production rules for these queries on the cyclic graphs of Proposition~\ref{prop:un_cfg_minlen_cfgintfa}.
 \item We compare the context-free grammar $\NT{q}_1$, which produces dense result sets, with the language $\Lang = \{\symbol\symbol\symbol \}$, which produces sparse result sets. We measure the running time of constructing minimizing sets of annotated production rules for these queries on the cyclic graphs of Proposition~\ref{prop:un_cfg_minlen_cfgintfa}.
\item We derive the longest path of minimum length that matches the following context-free grammar:
\begin{align*}
\NT{q}  &\ruleto \NT{a}\ \NT{q}',   &\NT{q}' &\ruleto \NT{q}\ \NT{b},   &\NT{q}  &\ruleto \NT{a}\ \NT{b},\\
\NT{a} &\ruleto \symbol_1,          &\NT{b} &\ruleto \symbol_2.
\end{align*}
Similar context-free grammars are used in Example~\ref{exam:same_generations} and Proposition~\ref{prop:cfg_minlen_cfgintfa}. We evaluate these queries on the double-cyclic graphs of Proposition~\ref{prop:cfg_minlen_cfgintfa}, for which we know that the query $\NT{q}$ will produce paths with a high minimum length. We measure both the running time of constructing a minimizing set of annotated production rules for $\NT{q}$ on the double-cyclic graphs, and the running time for deriving the longest path of minimum length from the resulting minimizing set.
\end{enumerate}

We remark that these tests illustrate extreme behavior: the queries apply to the entirety of the graph  and all nodes and edges will participate in the outcome. This does not reflect all practical applications, where one can often expect that queries are much more selective.

%
%
%
%
%
\pgfplotstableread{
    nodes	edges	non-terminals	output	max	paths	minimize-time	produce-time
    125	125	1	984375	125	15625	61075875	12981
    375	375	1	26437500	375	140625	1552183901	62488
    625	625	1	122265625	625	390625	7633070200	110183
    875	875	1	335343750	875	765625	24687993600	252970
    1125	1125	1	712546875	1125	1265625	56340109882	406019
    1375	1375	1	1300750000	1375	1890625	106192905378	594690
}\dataTcNLCycle

\pgfplotstableread{
    nodes	edges	non-terminals	output	max	paths	minimize-time	produce-time
    250	250	2	7844000	250	62750	29217696	29886
    750	750	2	211219500	750	563250	319018912	164219
    1250	1250	2	977345000	1250	1563750	953030655	380058
    1750	1750	2	2681220500	1750	3064250	1995372587	597105
    2250	2250	2	5697846000	2250	5064750	3489342369	811435
    2750	2750	2	10402221500	2750	7565250	5588403544	1054141
    3250	3250	2	17169347000	3250	10565750	8295287157	1284471
    3750	3750	2	26374222500	3750	14066250	12131320544	1611701
    4250	4250	2	38391848000	4250	18066750	16618607459	1727016
    4750	4750	2	53597223500	4750	22567250	23840314552	2028285
}\dataTcLCycle

\pgfplotstableread{
    nodes	edges	non-terminals	output	max	paths	minimize-time	produce-time
    250	250	3	1500	3	750	4773822	1811
    750	750	3	4500	3	2250	37841306	3622
    1250	1250	3	7500	3	3750	97539432	3321
    1750	1750	3	10500	3	5250	139118229	3321
    2250	2250	3	13500	3	6750	229878194	3925
    2750	2750	3	16500	3	8250	374672241	3622
    3250	3250	3	19500	3	9750	514578358	3321
    3750	3750	3	22500	3	11250	649393688	3622
    4250	4250	3	25500	3	12750	842178618	3924
    4750	4750	3	28500	3	14250	1055882752	3320
}\dataPLTCycle

\pgfplotstableread{
    nodes	edges	non-terminals	output	max	paths	minimize-time	produce-time
    250	251	4	496172501	31501	31751	13626250	3835902
    750	751	4	39762423751	282001	282751	131126742	40572352
    1250	1251	4	306154300001	782501	783751	254709372	98810017
    1750	1751	4	1175046801251	1533001	1534751	505095920	196078669
    2250	2251	4	3209314927501	2533501	2535751	856655548	326975080
    2750	2751	4	7159333678751	3784001	3786751	1336229001	493089219
    3250	3251	4	13962978055001	5284501	5287751	1884782576	694917664
    3750	3751	4	24745623056251	7035001	7038751	2690009509	945633255
    4250	4251	4	40820143682501	9035501	9039751	3570266705	1202059680
    4750	4751	4	63686914933751	11286001	11290751	4320691290	1515352059
}\dataDCMatchedCycle

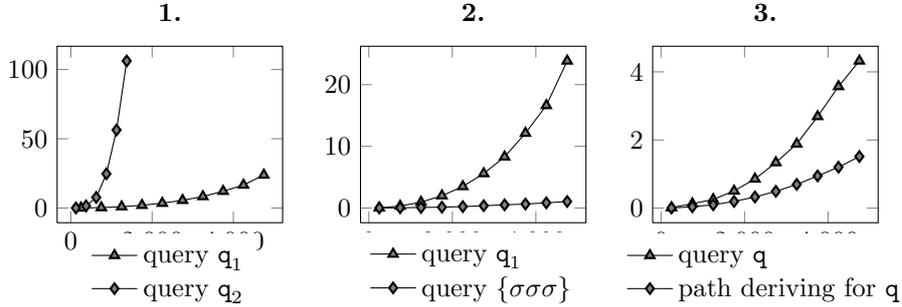
\begin{figure}[tb!]
    \centering
    \begin{tabular}{ccc}
        \begin{tikzpicture}
            \begin{axis}[plot,title={\textbf{1.}}]
                \addplot table[x=nodes,y expr=(\thisrow{minimize-time}/1000000000)] {\dataTcLCycle};
                \addplot table[x=nodes,y expr=(\thisrow{minimize-time}/1000000000)] {\dataTcNLCycle};
                \legend{\vphantom{hj}query $\NT{q}_1$,\vphantom{hj}query $\NT{q}_2$};
            \end{axis}
        \end{tikzpicture}
    &
        \begin{tikzpicture}
            \begin{axis}[plot,title={\textbf{2.}}]
                \addplot table[x=nodes,y expr=(\thisrow{minimize-time}/1000000000)] {\dataTcLCycle};
                \addplot table[x=nodes,y expr=(\thisrow{minimize-time}/1000000000)] {\dataPLTCycle};
                \legend{\vphantom{hj}query $\NT{q}_1$,\vphantom{hj}query $\{ \symbol\symbol\symbol \}$};
            \end{axis}
        \end{tikzpicture}
    &
        \begin{tikzpicture}
            \begin{axis}[plot,title={\textbf{3.}}]
                \addplot table[x=nodes,y expr=(\thisrow{minimize-time}/1000000000)] {\dataDCMatchedCycle};
                \addplot table[x=nodes,y expr=(\thisrow{produce-time}/1000000000)] {\dataDCMatchedCycle};
                \legend{\vphantom{hj}query $\NT{q}$,path deriving for $\NT{q}$};
            \end{axis}
        \end{tikzpicture}
    \end{tabular}
    \caption{Results of test measurements on Algorithm~\ref{alg:detnonrecag} and on deriving paths of minimum length. The horizontal axis displays the size of the graph ($\abs{\Nodes}$), the vertical axis displays the running time ($\mathrm{s}$), and the plot titles matches with the test descriptions in Section~\ref{sec:exp}.}\label{fig:plots}
\end{figure}

The measurements for these three tests are summarized in Figure~\ref{fig:plots}. On the one hand we see that Algorithm~\ref{alg:detnonrecag} can evaluate queries on large graphs, even if the resulting paths are large or if many paths are produced. For example, query $\NT{q}$ evaluated on double-cyclic graphs of $4750$ nodes gives a total of $11 \cdot 10^6$ paths, where the longest path consists of $11\cdot 10^6$ edges, this while the running time of Algorithm~\ref{alg:detnonrecag} is only $4.3\mathrm{s}$ and the longest path is derived from the resulting minimizing set in only $1.5\mathrm{s}$. Hence, in this case, the cost for answering query $\NT{q}$ using the single-path query semantics is at most $5.8\mathrm{s}$.

On the other hand, we see that the performance of Algorithm~\ref{alg:detnonrecag} is heavily influenced by the ambiguity of the context-free grammar. We see that query $\NT{q}_1$ evaluates magnitudes faster than query $\NT{q}_2$, even though the context-free languages underlying these queries are equivalent. The measurements on $\NT{q}_1$ and $\Lang$ show that evaluating $\Lang$ is faster, which is unsurprising as $\Lang$ is a much simpler query. Still, the difference in running time for these two queries is relatively small.

\section{Conclusions and future work}\label{sec:conclusions}

To address the limits of the traditional query semantics for navigational query languages such as the context-free path queries, we proposed path-based query semantics. We studied two such path-based query semantics, namely the all-paths query semantics and the single-paths query semantics, and we provided a formal framework for evaluating queries on graphs using both path-based query semantics. Our initial results show that the path-based query semantics have added practical value and a small-scale experiment on an implementation of the main query evaluation algorithms show that query answering is feasible, even when query results grow very large.

In conclusion, we believe that our work opens the door for further study of path-based query semantics. Besides the open problem already stated in this work---determining strict worst-case upper bounds on the size of the query result under the single-path query semantics---several other directions for future work have our interest.
\begin{enumerate}
\item Can we use more efficient algorithms for the problems outlined in this paper? Can we, for example, apply techniques used for context-free parsing~\cite{parsing_book,valiant} or for Datalog query evaluation~\cite{datalog,nowdatalog}?
\item All algorithms outlined in this paper are bottom-up. Can we derive top-down algorithms or, in general, goal-oriented algorithms for answering queries for a given pair of nodes?
\item Our measurements showed that two different context-free grammars for the same context-free language can have huge differences in the running time for query evaluation. Can we optimize context-free grammars to guarantee better performance? Can we provide more efficient query evaluation for deterministic or for unambiguous context-free grammars?
\item Are there approximation algorithms for evaluating queries using the single-path query semantics that guarantee to produce paths whose length is close to the length of paths of minimum length, while having a much lower complexity? Our initial work on this topic shows that straightforward naive methods exist to efficiently produce a deterministic non-recursive set of production rules. Although such deterministic non-recursive sets of production rules guarantee a worst-case upper bound on path lengths, the length of the resulting paths is not necessary close to optimal.
\item To which extent can we adopt path-based query evaluation such that it exploits parallel hardware, distributed computing, and/or specialized acceleration hardware?
\item Can we generalize path-based query semantics to query languages that do not query based on path structures, but query based on patterns in graphs (such as Datalog and the navigational expressions~\cite{relexpr}), and can we provide efficient query evaluation for such graph-based query semantics?
\end{enumerate}

\bibliographystyle{plain}
\bibliography{biblio}

\begin{thebibliography}{10}

\bibitem{genenetwork}
Uri Alon.
\newblock {\em An Introduction to Systems Biology: Design Principles of
  Biological Circuits}.
\newblock Chapman and Hall/CRC, 2006.

\bibitem{datalog}
Francois Bancilhon and Raghu Ramakrishnan.
\newblock An amateur's introduction to recursive query processing strategies.
\newblock In {\em Proceedings of the 1986 ACM SIGMOD International Conference
  on Management of Data}, SIGMOD '86, pages 16--52. ACM, 1986.

\bibitem{intpars}
Yehoshua Bar-Hillel, Micha~A. Perles, and Eli Shamir.
\newblock On formal properties of simple phrase structure grammars.
\newblock {\em Zeitschrift f\"{u}r Phonetik, Sprachwissenschaft und
  Kommunikationsforschung}, 14:143--172, 1961.

\bibitem{rpq}
Pablo Barcel\'{o}.
\newblock Querying graph databases.
\newblock In {\em Proceedings of the 32nd Symposium on Principles of Database
  Systems}, PODS '13, pages 175--188. ACM, 2013.

\bibitem{ecrpq}
Pablo Barcel\'{o}, Leonid Libkin, Anthony~W. Lin, and Peter~T. Wood.
\newblock Expressive languages for path queries over graph-structured data.
\newblock {\em ACM Transactions on Database Systems}, 37(4):31:1--31:46, 2012.

\bibitem{xpath2}
Anders Berglund, Scott Boag, Don Chamberlin, Mary~F. Fern\'{a}ndez, Michael
  Kay, and Jonathan~Robie and\ J\'{e}r\^{o}me~Sim\'{e}on.
\newblock {XML} path language ({XPath}) 2.0 (second edition).
\newblock \url{http://www.w3.org/TR/2010/REC-xpath20-20101214/}.
\newblock {W3C} Recommendation 14 December 2010 (Link errors corrected 3
  January 2011).

\bibitem{xml}
Tim Bray, Jean Paoli, C.~M. Sperberg-McQueen, Eve Maler, Fran\c{c}ois Yergeau,
  and John Cowan.
\newblock Extensible markup language ({XML}) {1.1} (second edition).
\newblock {W3C} recommendation, W3C, 2006.
\newblock \url{http://www.w3.org/TR/2006/REC-xml11-20060816}.

\bibitem{xpath1}
James Clark and Steve DeRose.
\newblock {XML} path language ({XPath}) version 1.0.
\newblock \url{http://www.w3.org/TR/1999/REC-xpath-19991116/}.
\newblock {W3C} Recommendation 16 November 1999.

\bibitem{modelcheck}
Edmund~M. Clarke, Orna Grumberg, and Doron Peled.
\newblock {\em Model Checking}.
\newblock The MIT Press, 1999.

\bibitem{intro_algo}
Thomas~H. Cormen, Charles~E. Leiserson, Ronald~L. Rivest, and Clifford Stein.
\newblock {\em Introduction to Algorithms}.
\newblock The MIT Press, 2009.

\bibitem{unilex}
P\'{a}l D\"{o}m\"{o}si.
\newblock Unusual algorithms for lexicographical enumeration.
\newblock {\em Acta Cybernetica}, 14(3):461--468, 2000.

\bibitem{dong}
YunMei Dong.
\newblock Linear algorithm for lexicographic enumeration of {CFG} parse trees.
\newblock {\em Science in China Series F: Information Sciences},
  52(7):1177--1202, 2009.

\bibitem{crpq}
Mary Fernandez, Daniela Florescu, Alon Levy, and Dan Suciu.
\newblock A query language for a web-site management system.
\newblock {\em SIGMOD Record}, 26(3):4--11, 1997.

\bibitem{relexpr}
George H.~L. Fletcher, Marc Gyssens, Dirk Leinders, Jan Van~den Bussche, Dirk
  Van~Gucht, Stijn Vansummeren, and Yuqing Wu.
\newblock Relative expressive power of navigational querying on graphs.
\newblock In {\em Proceedings of the 14th International Conference on Database
  Theory}, pages 197--207, 2011.

\bibitem{cfg_uhasselt}
Christophe~Costa Florêncio, Jonny Daenen, Jan Ramon, Jan~Van den Bussche, and
  Dries~Van Dyck.
\newblock Naive infinite enumeration of context-free languages in incremental
  polynomial time.
\newblock {\em Journal of Universal Computer Science}, 21(7):891--911, 2015.

\bibitem{fibheap}
Michael~L. Fredman and Robert~Endre Tarjan.
\newblock Fibonacci heaps and their uses in improved network optimization
  algorithms.
\newblock {\em Journal of the ACM}, 34(3):596--615, 1987.

\bibitem{nowdatalog}
Todd~J. Green, Shan~Shan Huang, Boon~Thau Loo, and Wenchao Zhou.
\newblock Datalog and recursive query processing.
\newblock {\em Foundations and Trends® in Databases}, 5:105--195, 2012.

\bibitem{parsing_book}
Dick Grune and Ceriel J.~H. Jacobs.
\newblock {\em Parsing Techniques}.
\newblock Monographs in Computer Science. Springer New York, 2008.

\bibitem{harel}
David Harel, Amir Pnueli, and Jonathan Stavi.
\newblock Propositional dynamic logic of nonregular programs.
\newblock {\em Journal of Computer and System Sciences}, 26(2):222--243, 1983.

\bibitem{sparql}
Steven Harris and Andy Seaborne.
\newblock {SPARQL} 1.1 query language.
\newblock {W3C} recommendation, W3C, 2013.
\newblock \url{http://www.w3.org/TR/2013/REC-sparql11-query-20130321}.

\bibitem{icdt2014}
Jelle Hellings.
\newblock Conjunctive context-free path queries.
\newblock In {\em Proceedings of the 17th International Conference on Database
  Theory (ICDT 2014)}, pages 119--130, 2014.

\bibitem{seper_strings}
Piotr Hofman and Wim Martens.
\newblock Separability by short subsequences and subwords.
\newblock In {\em 18th International Conference on Database Theory (ICDT
  2015)}, volume~31 of {\em Leibniz International Proceedings in Informatics
  (LIPIcs)}, pages 230--246. Schloss Dagstuhl--Leibniz-Zentrum fuer Informatik,
  2015.

\bibitem{flbook}
John~E. Hopcroft, Rajeev Motwani, and Jeffrey~D. Ullman.
\newblock {\em Introduction to Automata Theory, Languages, and Computation, 3th
  edition}.
\newblock Pearson, 2007.

\bibitem{pdl}
Martin Lange.
\newblock Model checking propositional dynamic logic with all extras.
\newblock {\em Journal of Applied Logic}, 4(1):39--49, 2006.

\bibitem{lexenum}
Erkki M\"{a}kinen.
\newblock On lexicographic enumeration of regular and context-free languages.
\newblock {\em Acta Cybernetica}, 13(1):55--61, 1997.

\bibitem{shorttermstring}
Michael~J. Mclean and Daniel~B. Johnston.
\newblock An algorithm for finding the shortest terminal strings which can be
  produced from non-terminals in context-free grammars.
\newblock In {\em Combinatorial Mathematics III}, volume 452 of {\em Lecture
  Notes in Mathematics}, pages 180--196. Springer Berlin Heidelberg, 1975.

\bibitem{pighizzini}
Giovanni Pighizzini, Jeffrey Shallit, and Ming wei Wang.
\newblock Unary context-free grammars and pushdown automata, descriptional
  complexity and auxiliary space lower bounds.
\newblock {\em Journal of Computer and System Sciences}, 65(2):393--414, 2002.

\bibitem{rdf}
Guus Schreiber and Yves Raimond.
\newblock {RDF} 1.1 primer.
\newblock {W3C} working group note, W3C, 2014.
\newblock \url{http://www.w3.org/TR/2014/NOTE-rdf11-primer-20140624}.

\bibitem{biocfg}
Petteri Sevon and Lauri Eronen.
\newblock Subgraph queries by context-free grammars.
\newblock {\em Journal of Integrative Bioinformatics}, 5(2), 2008.

\bibitem{valiant}
Leslie~G. Valiant.
\newblock General context-free recognition in less than cubic time.
\newblock {\em Journal of Computer and System Sciences}, 10(2):308--315, 1975.

\end{thebibliography}

\end{document}